\newtheorem{theorem}{Theorem}
\theoremstyle{plain}
\newtheorem{corollary}[theorem]{Corollary}
\newtheorem{proposition}[theorem]{Proposition}
\newtheorem{remark}[theorem]{Remark}
\numberwithin{equation}{section}
\numberwithin{theorem}{section}
\newcommand{\R}{\ensuremath{\mathbb{R}}}
\newcommand{\diag}{\operatornamewithlimits{diag}}
\newcommand{\Var}{\operatornamewithlimits{Var}}
\title[Swing options in commodity markets]{Swing options in commodity markets: A multidimensional L\'evy diffusion model}
\author[Eriksson]{Marcus Eriksson}
\author[Lempa]{Jukka Lempa}
\author[Nilssen]{Trygve Kastberg Nilssen}
\keywords{swing option, flexible load contract, dynamic programming problem, multi-factor model, L\'evy diffusion, HJB-equation, finite difference method}
\address[Marcus Eriksson]{\newline Department of Mathematics \newline University of Oslo\newline P.O. Box 1053, Blindern\newline N--0316 Oslo, Norway}
\email[]{mkerikss\@@math.uio.no}
\address[Jukka Lempa]{\newline Centre of Mathematics for Applications \newline University of Oslo\newline P.O. Box 1053, Blindern\newline N--0316 Oslo, Norway}
\email[]{jlempa\@@cma.uio.no}
\address[Trygve Kastberg Nilssen]{\newline Department of Economics and Business Administration\newline University of Agder\newline Serviceboks 422 \newline N-4604 Kristiansand, Norway}
\email[]{trygve.k.nilssen\@@uia.no}
\date{\today}
\begin{document}

\maketitle

\begin{abstract}
We study valuation of swing options on commodity markets when the commodity prices are driven by multiple factors. The factors are modeled as diffusion processes driven by a multidimensional L\'evy process. We set up a valuation model in terms of a dynamic programming problem where the option can be exercised continuously in time. Here, the number of swing rights is given by a total volume constraint. We analyze some general properties of the model and study the solution by analyzing the associated HJB-equation. Furthermore, we discuss the issues caused by the multi-dimensionality of the commodity price model. The results are illustrated numerically with three explicit examples.
\end{abstract}

\section{Introduction}

The purpose of this paper is to propose and analyze a model for valuation of a swing option, see, e.g. \cite{Burger et al}, written on multiple commodities when the commodity spot prices are driven by multiple, potentially non-Gaussian factors. More precisely, the model is formulated as a dynamic programming problem in continuous time. The holder of the option is contracted an amount of a given commodity that can be purchased for a fixed price during the lifetime of the contract. The purchases can be done (that is, the option can be exercised) continuously in time such that contracted rate constraints are fulfilled. This form of contract originates from electricity markets, where they are called flexible load contracts, see, e.g. \cite{Bjorgan et al,Lund Ollmar}. However, this model setting can also fit a traditional swing option with a high number of swing rights and possible exercise times. For example, we can think of a situation in an electricity market where contract is written for a year and holder can exercise on the hour-ahead market. This results into over 8000 possible exercise times, which makes, in particular, Monte-Carlo methods virtually intractable.

During the recent years, there has been a lot of activity on analysis of swing options. Being essentially a multi-strike American or Bermudan option, a natural way to approach swing options is via an optimal multiple stopping problem. In the recent papers \cite{Carmona Touzi,Bender 0}, the theory of optimal multiple stopping is developed in continuous time using sophisticated martingale theory. To compute option prices numerically, they develop appropriate Monte-Carlo methodology. Other methodology for swing option pricing includes forests of trees \cite{Jaillet et al,Hambly et al,Wahab et al,Edoli et al} or stochastic meshes \cite{Marshall}, multi-stage stochastic optimization \cite{Haarbrucker Kuhn}, (quasi-)variational inequalities \cite{Dahlgren,Kiesel et al} and PDE approaches \cite{Kjaer, BLN, Lund Ollmar}. Fundamentally, all of these methods are based on the dynamic programming principle.

As the main contribution of this paper, we develop a valuation model for multi-commodity swing options inspired by \cite{BLN}. In \cite{BLN}, the valuation problem was studied in the case of a single commodity driven by a one-factor Gaussian price process. In this paper, we generalize the results of \cite{BLN} to cover multiple contracted commodities with prices driven by multiple factors. From applications point of view, this is an important generalization, since there is a substantial body of literature supporting the usage of multi-factor models for commodity prices. Moreover, we allow also for non-Gaussian factors, which are favored, for example, in electricity price models, see, e.g. \cite{Benth Kallsen M-B,Hambly et al}. We model the factors as a multi-dimensional L\'evy diffusion and the underlying commodity prices are obtained by a linear mapping of the factors.  This makes our model more tractable yet keeping it still very flexible as it allows us to take, for example, heat rates and spreads into account in a natural way. Our study is also related to \cite{Keppo2004}, where a similar model is used to study hedging of swing options. We also refer to \cite{Kjaer}, where swing option pricing is considered under a non-Gaussian multi-factor price model. However, the analysis of \cite{Kjaer} is restricted to a modification of the so-called Deng model (see \cite{Deng}), which is a particular mean-reverting model. In our paper, we set up and analyze a class of models where the underlying factor prices follow a general L\'evy diffusion. The existing mathematical literature on swing options is mostly concerned with the pricing of a swing option. In addition to pricing, we also address the question of how to exercise a swing option optimally. From the analytical point of view, we identify using the HJB-equation an optimal exercise policy and characterize it in an intuitive way using the notion of marginal lost option value. We also present a numerical analysis of the problem including a numerical scheme based on the finite difference method.


The reminder of the paper is organized as follows. In Section 2 we propose our model for the valuation of swing options. In Section 3 we analyze some general properties of the value function. Section 4 is devoted to the derivation of necessary and sufficient conditions for a function to coincide with the value function. We illustrate our results with explicit examples in Section 5, which are solved numerically in Section 6. Finally, we conclude in Section 7.

\section{The valuation model}
\subsection{The price dynamics} As we mentioned in the introduction, the prices of the commodities are driven by multiple factors. Throughout the study, the number of commodities is $m$ and the number of driving factors is $n$. The factor dynamics $X$ are modeled by an $n$-dimensional L\'evy diffusion. To make a precise statement, let $\left(\Omega,\mathcal{F},\mathbb{F},\mathbf{P}\right)$ be a complete filtered probability space satisfying the usual conditions, where $\mathbb{F}=\{\mathcal{F}\}_{t\geq0}$ is the filtration generated by $X$. We assume that the factor process $X$ are given as a strongly unique solution of the It\^o equation
\begin{equation}
\label{def:general process X}
dX(t) = \alpha(t,X(t))dt+\bar{\bar{\sigma}}(t,X(t))dW(t)+\int_{\mathbb{R}^l}\bar{\bar{\gamma}}(t,X(t),\xi)N(dt,d\xi),
\end{equation}
where $W=(W^{1},\ldots,W^{n_b})$ is an $n_b$-dimensional, potentially correlated, Brownian motion satisfying $d\langle W_{t}^{i},W_{t}^{j}\rangle=\rho_{ij}dt$ with $\rho_{ij}\in[-1,1]$ for all $i,j$. Furthermore, $N=(N^{1},\ldots,N^{n_l})$ denotes an $n_l$-dimensional Poisson random measure with L\'{e}vy measure $\nu$ given by the independent Poisson processes $\eta^{1},\ldots,\eta^{n_l}$.
Here, $\nu(\{0\})$ is the unit measure concentrated on zero and it is finite. The coefficients $\alpha: [0,T]\times\R^n\rightarrow\R^n$, $\bar{\bar{\sigma}}: [0,T]\times\R^n\rightarrow\R^{n\times n_b}$ and $\bar{\bar{\gamma}}: [0,T]\times\R^n\times\R^{n_l}\longrightarrow\R^{n\times n_l}$ are assumed to be sufficiently well behaving Lipschitz continuous functions to guarantee that the It\^o equation \eqref{def:general process X} has a unique strong solution -- see \cite{Applebaum}, p. 365 -- 366. The motivation to model the randomness using Brownian and finite activity jump noise
comes from electricity prices. In this framework, the jump process models the spiky behavior in the prices whereas the Brownian motion takes care of the small fluctuations.

Using the factor dynamics $X$, we define the $m$-dimensional price process $t\mapsto P_t:=P(X_t)$ via the linear transformation
\begin{equation}\label{Price Process}
P(x)=Bx,
\end{equation}
where $x\in\mathbb{R}^n$ and $B$ is a constant $m\times n$ matrix with $rank(B)=m\leq n$. In other words, there exists constants $(b_{ij})$ such
that $P^{i}(x)=\sum_{j=1}^{n}b_{ij}x^{j}$ for all $i=1,\dots,m$, that is, the commodity prices are linear combinations of the driving factors.
The component $P^i$ models the time evolution of the price of the $i$th commodity and this price is driven by the $n$
factors, i.e. the $n$-dimensional L\'{e}vy diffusion $X$ given as the solution of the It\^{o} equation \eqref{def:general process X}.
Since the price is linear as a function of the factors it is easy to change the model into a price model for spreads. Furthermore, the
matrix $B$ in \eqref{Price Process} can be interpreted as a constant weight between the different factors $X$ affecting the price. That allow us to take, for example, heat rates into account in our model.

In the definition of the factor dynamics, we assumed that the jump-diffusion $X$ and the driving Brownian motion and L\'{e}vy process have all different dimensions. For notational convenience, we assume in what follows that these dimensions are the same, i.e. $n=n_b=n_l$. We point out that the following analysis holds with obvious modifications also in the case where these dimensions are different.

\subsection{The valuation model}

The swing option written on the price process $P=P(X)$ gives the right to purchase the given amount $M$ of the commodities $i$ over the time period $[0,T]$. In addition to the global constraint $M$, the purchases are also subject to a local constraint $\bar{u}$ which corresponds to the maximal number of swing rights that can be exercised on a given time. Since the swing option can be exercised in continuous time, the local constraint is the maximum rate at which the option can be exercised. To formalize this, let $\mathcal{U}^i=\mathcal{U}_{M^i,\bar{u}^i}$ be the set of $\mathbb{F}$-measurable, real-valued processes $u^i=u^i(X)$ satisfying the constraints
\begin{displaymath}
u_t^i \in [0,\bar{u}^i], \ \int_0^T u^i_s ds \leq M^i,
\end{displaymath}
for all $i=1,\dots,m$ and $t\in[0,T]$. Here, the elements $\bar{u}=(\bar{u}^i)\in\mathbb{R}^m$ and $M=(M^i)\in\mathbb{R}^m$. The $\mathbb{R}^m$-valued process $Z$ defined as
\begin{equation}\label{def:Z}
Z^i_t = \int_0^t u_s ds,
\end{equation}
where $i=1,\dots,m$, keeps track of the amount purchased of commodity $i$ up to time $t$. In what follows, we call $Z$ the total volume and denote the product $\bigotimes_{i=1}^m\mathcal{U}^i$ as $\mathcal{U}$. The integral representation for $Z^{i}$ in \eqref{def:Z} is well defined due to the local constraint.

Denote the set $\mathcal{S}:=[0,T]\times\bigotimes_{i=1}^m[0,M_{i}] \times \R^m$ and define the affine function $A:\mathbb{R}^m\rightarrow\mathbb{R}^m$ as
\[ A(x)=Qx+K, \]
where $Q$ is an $m\times m$ matrix and $K\in\mathbb{R}^m$. Define the expected present value of the total exercise payoff  $J:\mathcal{S}\times\mathcal{U}\rightarrow\mathbb{R}$ given by the rate $u\in\mathcal{U}$ from time $t$ up to the terminal time $T$ (or, the performance functional of $u$) as
\begin{equation}
J(t,z,p,u)=\mathbf{E}\left[\int_t^T e^{-r(s-t)}\sum_{l=1}^m A^l(P_s)u^l_s ds \right. \left. \vphantom{\int_t^T} \right| \left. \vphantom{\int_t^T} Z_t=z, X_t=x \right],
\end{equation}
where $r>0$ is the constant discount factor. We point out that function $J$ is defined explicitly as a function of the factors $X$. This corresponds to that the holder of the contract observes the underlying factor and bases her exercise decisions of this information. Furthermore, we remark that this framework covers essentially call- and put-like payoffs, where the strike prices are given by the constant vector $K$. Now, the value function $V:\mathcal{S}\rightarrow\mathbb{R}$ is defined as
\begin{equation}\label{def:valuefunction}
V(t,z,p)=\sup_{u\in\mathcal{U}} J(t,z,p,u).
\end{equation}
We denote an optimal rate as $u^*$.

We make some remarks on the valuation problem \eqref{def:valuefunction}. The dimension of the decision variable $u$ is the same as the dimension of the price. That is, we can exercise the option for each price component, which corresponds to different commodities, with a different decision variable.
Furthermore, we defined the function $A$ such that it takes values in $\mathbb{R}^m$. This is done for notational convenience. Suppose that we have an $m$-dimensional price process but the decision variable $u$ is $k$-dimensional with $k\leq m$. This corresponds to the case where $m$ commodities are bundled into $k$ baskets and the holder can exercise the option on the baskets. Formally this is done by defining the affine function as $A:\R^{m}\rightarrow \R^{k}$. This will not affect the form of the value function.

\section{Some General Properties}

In this section we study some general properties of the valuation problem \eqref{def:valuefunction}. We split the analysis in two cases, depending on whether $M^i \geq \bar{u}^iT$ or $M^i < \bar{u}^iT$ for a given commodity $i$. In the latter case, the limit $M^i$ imposes an effective constraint on the usage of the option in the sense that the amount $M^i$ is dominated by the amount that can be purchased if the option is exercised on full rate over the entire time horizon. This case, i.e. the case when an effective volume constraint is present, is the interesting one from the practical point of view. It is also substantially more difficult to analyze mathematically as we will see later. Before considering this case, we study the complementary case when the effective volume constraint is absent. This will give us a point of reference in the other case.

\subsection{Without an effective volume constraint}
We consider first the case where $M^i\geq \bar{u}^i T$ for a given commodity $i$. The total volume constraint for the commodity $i$ is now superfluous, since it is possible for the holder to exercise the option at full rate throughout the lifetime of the contract. In the absence of an effective volume constraint for the commodity $i$, an optimal exercise rule is given by the next proposition.
\begin{proposition}\label{prop1}
Assume that $M^i\geq \bar{u}^i T$ for a given commodity $i$. Then an optimal exercise rate ${u^*}^i$ for the commodity $i$ reads as
\begin{equation*}
{u_{t}^*}^i=
\begin{cases}
\bar{u}^{i}, & \text{if } A^{i}(P_{t})>0, \\
0, & \text{if } A^{i}(P_{t})\leq 0,
\end{cases}
\end{equation*}
for all $t \in [0,T]$.
\end{proposition}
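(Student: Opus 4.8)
The plan is to exploit two structural features of the problem: the performance functional $J$ is \emph{linear} in the control $u$, and the admissible set $\mathcal{U}=\bigotimes_{i=1}^m\mathcal{U}^i$ is a \emph{product}, so the optimization decouples across commodities. Writing
\[
J(t,z,p,u)=\sum_{l=1}^m\mathbf{E}\!\left[\int_t^T e^{-r(s-t)}A^l(P_s)u^l_s\,ds\,\Big|\,Z_t=z,\,X_t=x\right]
\]
and fixing the components $u^j$, $j\neq i$, arbitrarily, one sees that $u\mapsto J$ is, as a function of $u^i$ alone, an affine functional whose linear part is $\mathbf{E}[\int_t^T e^{-r(s-t)}A^i(P_s)u^i_s\,ds]$; the constraints defining $\mathcal{U}^i$ involve $u^i$ only. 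Hence it suffices to show that the proposed ${u^*}^i$ maximizes this linear part over $u^i\in\mathcal{U}^i$.

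Next I would dispose of the volume constraint. When $M^i\geq\bar u^iT$, every admissible $u^i$ satisfies $\int_0^T u^i_s\,ds\leq\bar u^iT\leq M^i$ automatically, and likewise $\int_t^T u^i_s\,ds\leq\bar u^i(T-t)\leq M^i-Z^i_t$ since $Z^i_t=\int_0^t u^i_s\,ds\leq\bar u^it$. So the constraint $\int_0^T u^i_s\,ds\leq M^i$ is vacuous, and the feasible set for the $i$th coordinate reduces to the ``box'' of adapted processes with $u^i_s\in[0,\bar u^i]$ for all $s$.

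The core of the argument is then a pathwise, pointwise comparison. For any $u^i$ with $0\leq u^i_s\leq\bar u^i$ and a.e.\ $s\in[t,T]$ one has $A^i(P_s)u^i_s\leq\big(A^i(P_s)\big)^+\bar u^i$: indeed the left side is $\leq A^i(P_s)\bar u^i$ when $A^i(P_s)>0$ and $\leq 0$ when $A^i(P_s)\leq 0$. Multiplying by $e^{-r(s-t)}>0$, integrating over $[t,T]$ and taking the conditional expectation preserves the inequality, so
\[
\mathbf{E}\!\left[\int_t^T e^{-r(s-t)}A^i(P_s)u^i_s\,ds\,\Big|\,Z_t=z,\,X_t=x\right]\leq\mathbf{E}\!\left[\int_t^T e^{-r(s-t)}\big(A^i(P_s)\big)^+\bar u^i\,ds\,\Big|\,Z_t=z,\,X_t=x\right].
\]
The right-hand side is exactly the linear part evaluated at the proposed policy, since ${u^*_s}^i=\bar u^i$ precisely on $\{A^i(P_s)>0\}$ and $0$ elsewhere (the value on $\{A^i(P_s)=0\}$ being immaterial). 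Together with the decoupling above, this is the assertion.

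The only points needing care are the admissibility of ${u^*}^i$ and the finiteness of the expectations. Admissibility: $s\mapsto P_s=BX_s$ is adapted and right-continuous with left limits, so $s\mapsto\bar u^i\mathbf{1}_{\{A^i(P_s)>0\}}$ is an adapted, jointly measurable process with values in $\{0,\bar u^i\}\subset[0,\bar u^i]$, and we have just checked that the volume constraint is automatically met, so ${u^*}^i\in\mathcal{U}^i$. Finiteness: integrability of $s\mapsto A^i(P_s)=(QBX_s+K)^i$ on $[t,T]$ (hence of all the integrals above) follows from the standing moment estimates for the L\'evy diffusion $X$ together with the affine form of $A$ and the linearity of $P$. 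I do not anticipate a genuine obstacle here — the substance of the proof is simply that a linear functional over a box constraint is maximized at the ``bang--bang'' vertex, and the steps above merely make this precise in the stochastic control setting.
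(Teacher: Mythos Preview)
Your proposal is correct and follows essentially the same approach as the paper: decouple the $i$th commodity using linearity of $J$ in $u$ and the product structure of $\mathcal{U}$, then bound $A^i(P_s)u^i_s$ pointwise by $A^i(P_s){u^*_s}^i$ via the sign-splitting $\{A^i(P_s)>0\}$ versus $\{A^i(P_s)\le 0\}$, and conclude by taking suprema. Your write-up is somewhat more explicit on side issues the paper leaves implicit (why the volume constraint is vacuous, measurability/admissibility of the bang--bang rule, integrability), but the mathematical content is the same.
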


\begin{proof}
Let $u \in \mathcal{U}$ and $t\in [0,T]$. First, we observe that ${u^{*}}^i\in \mathcal{U}^i$. Furthermore, we find that
\begin{equation}\label{optimalnoconstraint}
\begin{split}
\mathbf{E}&\left[ \int_{t}^{T}e^{-r(s-t)}\sum_{l=1}^{m} A^{l}(P_{s})u_{s}^{l}ds \right. \left. \vphantom{\int_{t}^{T}} \right| \left. \vphantom{\int_{t}^{T}} Z_t=z, \ X_t=x\right]
\\&=\mathbf{E}\left[  \int_{t}^{T}e^{-r(s-t)}A^{i}(P_{t})u_{s}^{i}\mathbf{1}_{\{A^{i}(P_{t})\leq 0\}}ds\right. \left. \vphantom{\int_{t}^{T}} \right| \left. \vphantom{\int_{t}^{T}} Z_t=z, \ X_t=x\right]
\\&\quad+\mathbf{E}\left[ \int_{t}^{T}e^{-r(s-t)}A^{i}(P_{t})u_{s}^{i}\mathbf{1}_{\{A^{i}(P_{t})>0\}}ds\right. \left. \vphantom{\int_{t}^{T}} \right| \left. \vphantom{\int_{t}^{T}} Z_t=z, \ X_t=x\right]
\\&\quad+\mathbf{E}\left[\int_{t}^{T}e^{-r(s-t)}\sum_{\substack{l=1 \\ l\neq i}}^{m} A^{l}(P_{t})u_{s}^{l}ds\right. \left. \vphantom{\int_{t}^{T}} \right| \left. \vphantom{\int_{t}^{T}} Z_t=z, \ X_t=x\right]
\\&\leq \mathbf{E}\left[\int_{t}^{T}e^{-r(s-t)}A^{i}(P_{t}){u_{s}^{i}}^*ds\right. \left. \vphantom{\int_{t}^{T}} \right| \left. \vphantom{\int_{t}^{T}} Z_t=z, \ X_t=x\right]
\\&\quad+\mathbf{E}\left[\int_{t}^{T}e^{-r(s-t)}\sum_{\substack{l=1 \\ l\neq i}}^{m} A^{l}(P_{t})u_{s}^{l}ds\right. \left. \vphantom{\int_{t}^{T}} \right| \left. \vphantom{\int_{t}^{T}} Z_t=z, \ X_t=x\right].
\end{split}
\end{equation}
Now, take supremum over all $u^l$ on the left hand side and supremum over $u^l$, $l \neq i$, on the right hand side of \eqref{optimalnoconstraint}. Since the functional $J$ is linear in $u$, the same inequality still holds and, consequently, the conclusion follows.
 \end{proof}
Proposition \ref{prop1} states that in the absence of an effective volume constraint for commodity $i$, it is optimal to exercise the option whenever the payoff $A^i(P_t)$ is positive, i.e. when $(QP_t)^i\geq-K^i$. This is a natural result, since the holder does not have to worry of running out of the option over the planning horizon. Furthermore, since $\bar{u}$ is a constant vector, we find using Proposition \ref{prop1} that the value function does not depend on $z^i$ in the absence of an effective volume constraint for commodity $i$. This yields the following corollary.
\begin{corollary}[]
\label{lemma1}
In the absence of an effective volume constraint for a given commodity $i$, the marginal value $\frac{\partial V}{\partial z^{i}}(t,z,p)=0$.
\end{corollary}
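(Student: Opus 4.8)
The plan is to leverage Proposition \ref{prop1} directly. The key observation is that in the absence of an effective volume constraint for commodity $i$, Proposition \ref{prop1} exhibits an optimal rate ${u^*}^i$ that is a deterministic function of the current price $P_t$ alone — namely $\bar{u}^i$ when $A^i(P_t)>0$ and $0$ otherwise — and this prescription makes no reference to the accumulated volume $z^i$. Since $\bar{u}^i$ is a constant, the feasibility of this policy (in particular the constraint $\int_0^T {u^*}^i_s\,ds \le M^i$) is guaranteed for every starting value $z^i$, precisely because $M^i \ge \bar{u}^i T$. Moreover, the other components ${u^*}^l$ for $l \ne i$, and their feasibility, likewise do not interact with $z^i$. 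Hence the optimal policy, and thus the supremum defining $V$, can be attained uniformly in $z^i$.

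The concrete steps I would carry out are: (1) Fix $(t,z,p)\in\mathcal{S}$ and an arbitrary perturbation $z'$ of $z$ in the $i$th coordinate, keeping all other coordinates fixed; both $z$ and $z'$ remain in the admissible range because $M^i \ge \bar{u}^i T$. (2) Observe that the performance functional $J(t,z,p,u)$ depends on $z$ only through the admissibility of $u$ (the integrand $\sum_l A^l(P_s)u^l_s$ does not involve $Z$ at all), and that for commodity $i$ the policy from Proposition \ref{prop1} is admissible regardless of the value $z^i$. (3) Conclude that $V(t,z,p) = V(t,z',p)$, i.e. $V$ is constant in the $i$th volume coordinate, which gives $\frac{\partial V}{\partial z^i}(t,z,p) = 0$ wherever the derivative exists.

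The main obstacle — really the only subtlety — is making precise the claim that $J$, and hence $V$, depends on $z^i$ solely through the constraint set $\mathcal{U}^i$, and then arguing that this constraint set is, in the relevant sense, independent of the initial value $z^i$ when $M^i \ge \bar{u}^i T$. One has to be slightly careful: strictly speaking the running volume $Z^i_s = z^i + \int_t^s u^i_r\,dr$ does depend on the starting point $z^i$, so the constraint $Z^i_s \le M^i$ for $s\in[t,T]$ is genuinely $z^i$-dependent. The resolution is that the optimal control ${u^*}^i$ identified in Proposition \ref{prop1} is still feasible for the starting value $z^i$ — indeed $\int_t^T {u^*}^i_s\,ds \le \bar{u}^i(T-t) \le \bar{u}^i T \le M^i$, but more importantly the proof of Proposition \ref{prop1} shows this control is optimal for \emph{any} admissible starting configuration because the pointwise-in-time comparison used there never invokes the volume constraint. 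So the supremum is attained at the same value of $J$ for every $z^i$, and differentiating in $z^i$ yields zero. This is essentially the remark made in the paragraph preceding the corollary, and I would simply formalize it along these lines.
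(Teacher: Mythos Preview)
Your proposal is correct and follows the same route as the paper: the paper does not give a separate proof of the corollary but simply remarks, immediately before stating it, that since $\bar{u}$ is constant, Proposition~\ref{prop1} implies the value function does not depend on $z^i$. Your write-up is a faithful and somewhat more detailed formalization of exactly that observation.
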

Corollary \ref{lemma1} is also a very natural result. Indeed, if the holder uses the option on a commodity with no effective volume constraint, the option will not lose value.

To close the subsection, we discuss how the dimension of the range of the function $A$ affects the value given by \eqref{def:valuefunction}. For simplicity, assume that there is no effective volume constraint for any of the commodities and that the function $A:\mathbb{R}^m\rightarrow\mathbb{R}^m$ is of the form
\begin{equation}\label{DiagA}
A(x)=\diag(q_1,\dots,q_m)\cdot x + K,
\end{equation}
for $K\in\R^m$. Using Proposition \ref{prop1}, we know that the optimal exercise rule for the valuation problem specified by the payoff structure \eqref{DiagA} is
\begin{equation}
{u^*_{t}}^{l}=
\begin{cases}
\bar{u}^{l} & \text{if } A^{l}(P_{t})>0 \\
0 & \text{if } A^{l}(P_{t})\leq 0,
\end{cases}
\end{equation}
for all $t \in [0,T]$ and $l=1,\dots,m$. Formally, we can decrease the dimension of the range of $A$ from $m$, for example, as follows. Take $m'<m$ and define the $m'\times m$-matrix $\hat{Q}=(\hat{q}_{ij})$ such that each $q_i$ occurs only once and on exactly one column of $\hat{Q}$ and the other elements are zero. In financial terms, this means that the commodities $i$ are bundled into $m'$ pairwise disjoint baskets with weights $q_i$. Then the option gives exercise rights on each of these baskets with separate exercise rates. Now, let the function $\hat{A}:\mathbb{R}^m\rightarrow\mathbb{R}^{m'}$ be $\hat{A}(x)=\hat{Q}\cdot x +\hat{K}$ with $\hat{K}\in\R^{m'}$ such that
\begin{equation}\label{K:Cond}
\hat{K}_i = \sum_{\substack{j=1 \\ \hat{q}_{ij}\neq 0}}^{m} K_j,
\end{equation}
for all $i=1,\dots,m'$. Using the same reasoning as in Proposition \ref{prop1} we find that the optimal exercise rule for the valuation problem \eqref{def:valuefunction} given by $\hat{A}$ is
\begin{equation}
\hat{u}^l_{t}=
\begin{cases}
\hat{\bar{u}}^l & \text{if } \hat{A}^l(P_{t})>0 \\
0 & \text{if } \hat{A}^l(P_{t})\leq 0,
\end{cases}
\end{equation}
for all $t \in [0,T]$ and $l=1,\dots,m'$. Denote the value for $m$-dimensional ($m'$-dimensional) problem as $V$ ($\hat{V}$). Furthermore, denote the $m'$-dimensional total volume variable as $\hat{Z}$ and assume that all maximal exercise rates coincide: $\bar{u}^l=\hat{\bar{u}}^{l'}=\bar{u}$ for all $l=1,\dots,m$ and $l'= 1,\dots,m'$. Then, due to the structure of matrix $\hat{Q}$, we find using \eqref{K:Cond} that
\begin{equation}
\begin{split}
V(t,z,p)&=\mathbf{E}\left[\int_t^T e^{-r(s-t)} \sum_{l=1}^m A^l(P_s)\bar{u}\mathbf{1}_{\{q_lP^l_s+K_l>0 \}} ds \right. \left. \vphantom{\int_{t}^{T}} \right| \left. \vphantom{\int_{t}^{T}} Z_t=z, \ X_t=x\right]
        \\&\geq \mathbf{E}\left[\int_t^T e^{-r(s-t)} \sum_{i=1}^{m'}\left(\sum_{\substack{j=1 \\ \hat{q}_{ij}\neq 0}}^m (\hat{q}_{ij}P^j_s + \hat{K}_j)\bar{u}\mathbf{1}_{H^i_s} \right)ds \right. \left. \vphantom{\int_{t}^{T}} \right| \left. \vphantom{\int_{t}^{T}} \hat{Z}_t=\hat{z}, \ X_t=x\right] \\&= \hat{V}(t,\hat{z},p),
\end{split}
\end{equation}
where the events
\[ H^i_s =\left\{\sum_{\substack{j=1 \\ \hat{q}_{ij}\neq 0}}^m (\hat{q}_{ij}P^j_s + \hat{K}_j) >0 \right\}, \]
and the cumulative variable $\hat{Z}$ is defined analogously to \eqref{K:Cond}. Summarizing, we have shown that by bundling commodities $i$ into mutually disjoint baskets and, thus, reducing the dimension of the exercise rate process $u$, we lower the value of the option. This is, again, a natural result, since the bundling of commodities lowers flexibility of option contract in the sense that the holder must exercise the option at the same rate for all commodities in the same basket. This is in contrast to the case with separate commodities, where the exercise rates can be decided individually for each commodity.

\subsection{With an effective volume constraint}
In this section, we consider the case where $M_i\leq\bar{u}_i T$, in other words, the case when the total volume constraint is less than the maximal amount of commodity that can be acquired over the lifetime of the option. From the practical point of view, this is the interesting case. It is also substantially more difficult to analyze, since in this case we cannot find an optimal exercise policy explicitly as in Proposition $\ref{prop1}$. Instead we find the value function as the solution to the HJB-equation and an optimal exercise policy is obtained as a biproduct.

Our first task is to write the conditional expectation in \eqref{def:valuefunction} such that it depends explicitly on $Z$. This will be helpful in the later analysis. To this end, define the process $Y$ as $Y_t=e^{-rt}\sum_{l=1}^{m}A^{l}(P(X_{t}))Z_{t}^{l}$. Then the It\^{o} formula yields
\begin{align} \label{o0}
dY_t&=-re^{-rt}\sum_{l=1}^{m}A^{l}(P(X_{t}))Z_{t}^{l}dt
\\&+\sum_{i=1}^{n}e^{-rt}\sum_{l=1}^{m}\sum_{j=1}^{m}Z_{t}^{l}\frac{\partial A^{l}(P(X_{t}))}{\partial p^{j}}\frac{\partial P^{j}(X_{t})}{\partial
x^{i}}(\alpha_{i}(t,X_{t})dt+\sigma_{i}(t,X_{t})dW_{t})\nonumber
\\&+e^{-rt}\sum_{l=1}^{m}A^{l}(P(X_{t}))u_{t}^{l}dt \nonumber
\\&+\frac{1}{2}\sum_{i,k=1}^{n}(\sigma\sigma^{T})_{ik}e^{-rt}\sum_{l=1}^{m}Z_{t}^{l}\frac{\partial^{2}A^{l}(P(X_{t}))}{\partial x^{i}\partial x^{k}}d\langle W^{i},W^{k}\rangle \nonumber
\\&+\sum_{k=1}^{n}\int_{\R}e^{-rt}\sum_{l=1}^{m}Z_{t}^{l}\left\{A^{l}(P(X_{t^{-}}
+\gamma^{(k)}(t,X_{t^{-}},\xi^{k})))-A^{l}(P(X_{t^{-}}))\right\}N^{k}(dt,d\xi^{k})\nonumber.
\end{align}
Note that since $A$ is affine and $P$ is linear, we have
\begin{align}
\label{o1}
&A^{l}(P(X_{t^{-}}+\gamma^{(k)}(t,X_{t^{-}},\xi^{k})))-A^{l}(P(X_{t^{-}}))\nonumber\\
&=\sum_{v=1}^{m}q_{lv}P^{v}(X_{t^{-}}+\gamma^{(k)}(t,X_{t^{-}},\xi^{k}))+K^{l}-q_{lv}P^{v}(X_{t^{-}})-K^{l}
\nonumber \\
&=\sum_{v=1}^{m}q_{lv}P^{v}(\gamma^{(k)}(t,X_{t^{-}},\xi^{k})),
\end{align}
\begin{equation}
\label{o2}
\frac{\partial^{2}A^{l}(P(X_{t}))}{\partial x^{i}\partial x^{k}}=0,
\end{equation}
\begin{equation}
\label{o3}
\frac{\partial A^{l}(P(X_{t}))}{\partial p^{j}}=q_{lj},\quad \frac{\partial P^{j}(X_{t})}{\partial x^{i}}=b_{ji} \quad \text{and} \quad \sum_{j=1}^{m}q_{lj}b_{ji}=c_{li},
\end{equation}
where $q_{lj}$, $b_{ji}$ and $c_{li}$ are constants for all $i,j,l$. Substitution of \eqref{o1}, \eqref{o2} and \eqref{o3} into \eqref{o0} yields
\begin{align}
\label{firstito}
dY_t&=-re^{-rt}\sum_{l=1}^{m}A^{l}(P(X_{t}))Z_{t}^{l}dt+\sum_{i=1}^{n}e^{-rt}\sum_{l=1}^{m}Z_{t}^{l}c_{li}(\alpha_{i}(t,X_{t})dt+\sigma_{i}(t,X_{t})dW_{t}) \nonumber \\
&+e^{-rt}\sum_{l=1}^{m}A^{l}(P(X_{t}))u_{t}^{l}dt \nonumber \\
&+\sum_{k=1}^{n}\int_{\R}e^{-rt}\sum_{l=1}^{m}Z_{t}^{l}\sum_{v=1}^{m}q_{lv}P^{v}(\gamma^{(k)}(t,X_{t^{-}},\xi^{k}))N^{k}(dt,d\xi^{k}).
\end{align}
Since
\begin{equation}
\label{Ncompensated}
N(dt,d\xi)=\tilde{N}(dt,d\xi)+\nu(d\xi)dt,
\end{equation}
where $\nu$ is the L\'evy measure, we find that \eqref{firstito} can be written as
\begin{align*}
dY_t&=-re^{-rt}\sum_{l=1}^{m}A^{l}(P(X_{t}))Z_{t}^{l}dt+\sum_{i=1}^{n}e^{-rt}\sum_{l=1}^{m}Z_{t}^{l}c_{li}(\alpha_{i}(t,X_{t})dt+\sigma_{i}(t,X_{t})dW_{t}) \nonumber \\
&+e^{-rt}\sum_{l=1}^{m}A^{l}(P(X_{t}))u_{t}^{l}dt \nonumber \\
&+\sum_{k=1}^{n}\int_{\R}e^{-rt}\sum_{l=1}^{m}Z_{t}^{l}\sum_{v=1}^{m}q_{lv}P^{v}(\gamma^{(k)}(t,X_{t^{-}},\xi^{k}))\nu^{k}(d\xi^{k})dt\nonumber\\
&+\sum_{k=1}^{n}\int_{\R}e^{-rt}\sum_{l=1}^{m}Z_{t}^{l}\sum_{v=1}^{m}q_{lv}P^{v}(\gamma^{(k)}(t,X_{t^{-}},\xi^{k}))\tilde{N}^{k}(dt,d\xi^{k}).
\end{align*}
By integrating this from $t$ to $T$, we obtain
 \begin{align}
 \label{o4}
&e^{-rT}\sum_{l=1}^{m}A^{l}(P(X_{T}))Z_{T}^{l}-e^{-rt}\sum_{l=1}^{m}A^{l}(P(X_{t}))Z_{t}^{l}\nonumber\\
&=\int_{t}^{T}\Bigg[-re^{-rs}\sum_{l=1}^{m}A^{l}(P(X_{s}))Z_{s}^{l}+e^{-rs}\sum_{i=1}^{n}\sum_{l=1}^{m}Z_{s}^{l}c_{li}\alpha_{i}(s,X_{s})\nonumber\\
&+\sum_{k=1}^{n}\int_{\R}e^{-rs}\sum_{l=1}^{m}Z_{s}^{l}\sum_{v=1}^{m}q_{lv}P^{v}(\gamma^{(k)}(s,X_{s^{-}},\xi^{k}))\nu^{k}(d\xi^{k})\Bigg]ds\nonumber\\
&+\int_{t}^{T}e^{-rs}\sum_{l=1}^{m}A^{l}(P(X_{s}))u_{s}^{l}ds
+\int_{t}^{T}e^{-rs}\sum_{i=1}^{n}\sum_{l=1}^{m}Z_{s}^{l}c_{li}\sigma_{i}(s,X_{s})dW_{s}\nonumber \\
&+\int_{t}^{T}\sum_{k=1}^{n}\int_{\R}e^{-rs}\sum_{l=1}^{m}Z_{s}^{l}\sum_{v=1}^{m}q_{lv}P^{v}(\gamma^{(k)}(s,X_{s^{-}},\xi^{k}))\tilde{N}^{k}(ds,d\xi^{k}).
\end{align}
Consider first the Brownian integral $\int_{t}^{T}e^{-rs}\sum_{i=1}^{n}\sum_{l=1}^{m}Z_{s}^{l}c_{li}\sigma_{i}(s,X_{s})dW_{s}$. Each of the integrands is of the form $e^{-rs}Z_{s}^{l}c_{li}\sigma_{ij}(s,X_{s})dW_{s}^{j}$. By definition of $Z_{t}^{l}$, we know that $0\leq Z_{t}^{l}\leq \bar{u}^{l}t$. Since $Z_{t}^{l}$ is nondecreasing, it follows that $(Z_{t}^{l})^{2}\leq(\bar{u}^{l}t)^{2}\leq(\bar{u}^{l}T)^{2}$. Hence,
\begin{align*}
&\mathbf{E}\left[\int_{0}^{T}e^{-2rs}(Z_{s}^{l})^{2}c_{li}^{2}\sigma^2_{ij}(s,X_{s})ds\right. \left. \vphantom{\int_{0}^{t}} \right| \left. \vphantom{\int_{t}^{T}} Z_0=z, \ X_0=x\right]\nonumber\\
&\leq (\bar{u}^{l}Tc_{li})^{2}\mathbf{E}\left[\int_{0}^{T}\sigma^2_{ij}(s,X_{s})ds\right. \left. \vphantom{\int_{0}^{t}} \right| \left. \vphantom{\int_{t}^{T}} Z_0=z, \ X_0=x\right]<\infty.
\end{align*}
Using a martingale representation theorem, see, e.g. \cite{Applebaum}, Thrm. 5.3.6, we conclude that
\[ t\mapsto \int_{0}^{t}e^{-rs}\sum_{i=1}^{n}\sum_{l=1}^{m}Z_{s}^{l}c_{li}\sigma_{i}(s,X_{s})dW_{s} \]
is a martingale with respect to $\mathbb{F}$. Using the same argument, we find that the process 
\begin{equation*}
t\mapsto\int_{0}^{t}\sum_{k=1}^{n}\int_{\R}e^{-rs}\sum_{l=1}^{m}Z_{s}^{l}\sum_{v=1}^{m}q_{lv}P^{v}(\gamma^{(k)}(s,X_{s^{-}},\xi^{k}))\tilde{N}^{k}(ds,d\xi^{k})
\end{equation*}
is a also a martingale with respect to $\mathbb{F}$. Consequently, the conditional expectation with respect to $\mathcal{F}_t$ is zero for the last two terms in \eqref{o4}.

By multiplying \eqref{o4} with $e^{rt}$ on both sides, substituting into $(\ref{def:valuefunction})$ and using the martingale properties, we find
\begin{align}
V(t,z,p)&=\sup_{u\in \mathcal{U}}\mathbf{E}\left[e^{-r(T-t)}\sum_{l=1}^{m}A^{l}(P(X_{T}))Z_{T}^{l}-\sum_{l=1}^{m}A^{l}(P(X_{t}))Z_{t}^{l} \right. \nonumber\\
&\left.-\int_{t}^{T}\left(-re^{-r(s-t)}\sum_{l=1}^{m}A^{l}(P(X_{s}))Z_{s}^{l}+e^{-r(s-t)}\sum_{i=1}^{n}\sum_{l=1}^{m}Z_{s}^{l}c_{li}\alpha_{i}(s,X_{s}) \right.\right.\nonumber\\
&+\left.\sum_{k=1}^{n}\int_{\R}e^{-r(s-t)}\sum_{l=1}^{m}Z_{s}^{l}\sum_{v=1}^{m}q_{lv}P^{v}(\gamma^{(k)}(s,X_{s-},\xi^{k}))\nu^{k}(d\xi^{k})\right)ds \left. \vphantom{\int_{t}^{T}} \right| \left. \vphantom{\int_{t}^{T}} Z_t=z, \ X_t=x\right].
\end{align}
Using that the measurability of $\sum_{l=1}^{m}A^{l}(P(X_{t}))Z_{t}^{l}$, we can express the value function \eqref{def:valuefunction} as
\begin{align*}
V(t,z,p)&=-\sum_{l=1}^{m}A^{l}(p)z^{l}+\sup_{u\in \mathcal{U}}\mathbf{E}\left[e^{-r(T-t)}\sum_{l=1}^{m}A^{l}(P(X_{T}))Z_{T}^{l} \right. \nonumber\\
&-\int_{t}^{T}\left(-re^{-r(s-t)}\sum_{l=1}^{m}A^{l}(P(X_{s}))Z_{s}^{l}+e^{-r(s-t)}\sum_{i=1}^{n}\sum_{l=1}^{m}Z_{s}^{l}c_{li}\alpha_{i}(s,X_{s})\right.\nonumber\\
&+\left.\sum_{k=1}^{n}\int_{\R}e^{-r(s-t)}\sum_{l=1}^{m}Z_{s}^{l}\sum_{v=1}^{m}q_{lv}P^{v}(\gamma^{(k)}(s,X_{s^{-}},\xi^{k}))\nu^{k}(d\xi^{k})\right)ds\left. \vphantom{\int_{t}^{T}} \right| \left. \vphantom{\int_{t}^{T}} Z_t=z, \ X_t=x\right].
\end{align*}
We now have an explicit dependence on $Z$ in the value function, which will be useful in the proof of the following proposition. We point out that we can assume that we have an effective volume constraint in all commodities $i$, since the complementary case is already covered by Proposition \ref{prop1}.
\begin{proposition}[]
\label{vznonpositive}
In the presence of an effective volume constraint, i.e. when $M^{i}<\bar{u}^{i}T$, the marginal value $V_{z^{i}}(t,z,p)\leq 0$ for all $i$.
\end{proposition}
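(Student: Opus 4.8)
The plan is to show that increasing the initial stock $z^i$ of a commodity subject to an effective volume constraint cannot increase the option value, by means of a coupling/admissibility argument on the rate processes. Fix $t$, $x=p$, and two stock levels $z$ and $z'$ that differ only in the $i$th coordinate with ${z'}^i > z^i$. The key structural fact is that the performance functional $J$ depends on an admissible rate $u\in\mathcal U$ only through the payoff streams $A^l(P_s)u^l_s$, and the only role of the starting stock is to constrain the \emph{set} of admissible rates via $\int_t^T u^i_s\,ds \le M^i - z^i$. Since ${z'}^i>z^i$, the constraint is tighter at $z'$: every rate $u'$ admissible for the problem started at $(t,z',p)$ is also admissible for the problem started at $(t,z,p)$ (the local constraints $u^l_s\in[0,\bar u^l]$ and the global constraints for $l\neq i$ are identical, and $\int_t^T {u'}^i_s\,ds \le M^i - {z'}^i < M^i - z^i$). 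Hence the supremum defining $V(t,z,p)$ is taken over a larger feasible set than the one defining $V(t,z',p)$, so $V(t,z',p)\le V(t,z,p)$, which is exactly monotonicity in $z^i$ and gives $V_{z^i}\le 0$ wherever the derivative exists.

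Concretely, I would carry this out in the following steps. First, I would record the reformulated expression for $V$ derived just above the proposition, noting that after conditioning on $X_t=x$, $Z_t=z$, the process $Z_s^l = z^l + \int_t^s u^l_r\,dr$ enters the integrand explicitly, and that the only place the initial data $z^i$ appears that is coupled to the admissible set is through $Z^i$. Second, I would make precise the claim that $\mathcal U_{M^i-{z'}^i,\bar u^i}\subseteq\mathcal U_{M^i-z^i,\bar u^i}$ when ${z'}^i>z^i$ — or, equivalently, work directly with $\mathcal U$ as defined and observe that shifting the starting stock upward shrinks the feasible set of increments. Third, I would compare the two value functions: given any $\varepsilon>0$ pick a near-optimal $u'$ for $(t,z',p)$; since $u'$ is also admissible for $(t,z,p)$ with the \emph{same} realized payoff stream (the payoffs $A^l(P_s){u'}^l_s$ do not depend on the starting stock at all — only on the rate and the price), we get $J(t,z,p,u')=J(t,z',p,u')\ge V(t,z',p)-\varepsilon$, hence $V(t,z,p)\ge V(t,z',p)-\varepsilon$; let $\varepsilon\downarrow 0$. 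Fourth, I would conclude that $z^i\mapsto V(t,z,p)$ is nonincreasing, so its partial derivative, where it exists, is $\le 0$.

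One subtlety worth spelling out: in the original form of $J$ the integrand $\sum_l A^l(P_s)u^l_s$ genuinely has no dependence on $z$, so the comparison is immediate and one does not even need the reformulated expression — it is cleanest to argue directly from \eqref{def:valuefunction}. The reformulated expression is only needed later (for the HJB analysis) to exhibit the $Z$-dependence; here it would actually complicate matters since the $Z_s^l$ terms in the reformulated integrand are a red herring that cancels in aggregate. So I would phrase the proof entirely in terms of the admissible-set inclusion and the $z$-independence of the payoff functional.

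The main obstacle is essentially notational/conceptual rather than technical: one must be careful that the set $\mathcal U$ in \eqref{def:valuefunction} is, strictly speaking, defined with the global bound $\int_0^T u^i_s\,ds\le M^i$ over $[0,T]$ rather than over $[t,T]$ with a stock offset, so I would first clarify (as is implicit in the definition of the value function and the process $Z$) that when we condition on $Z_t=z$ the remaining admissible rates on $[t,T]$ are exactly those with $\int_t^T u^i_s\,ds\le M^i-z^i$, and then the monotonicity of the feasible set in $z^i$ is transparent. There is no hard estimate here; the proof is a short domain-monotonicity argument for a supremum, and the only care needed is to state the admissibility bookkeeping correctly.
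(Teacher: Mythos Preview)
Your argument is correct and considerably more direct than the paper's. You exploit the fact that in the original performance functional $J$ of \eqref{def:valuefunction} the integrand $\sum_l A^l(P_s)u^l_s$ does not depend on $z$ at all, so the initial stock enters only through the admissible set via $\int_t^T u^i_s\,ds\le M^i-z^i$; monotonicity of the feasible set then gives monotonicity of the supremum immediately. The paper instead works with the It\^{o}-reformulated expression in which $Z$ appears explicitly in the integrand, writes the partial derivative as a limit of a difference of suprema, splits this into several pieces $I_1$, $I_\varepsilon$, $I_0$, bounds two of the resulting sup-differences using the inclusion $\mathcal{U}^\varepsilon\subseteq\mathcal{U}$ and the shift map $\check Z_s=Z^\varepsilon_s-\varepsilon$, and then applies It\^{o}'s formula once more to show that the residual term vanishes. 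Your route is more elementary and makes the economic content (smaller remaining budget $\Rightarrow$ smaller value) transparent; the paper's route is heavier but has the side benefit of exercising the $Z$-explicit representation that is reused in the HJB derivation, and its final remark recovers Corollary~\ref{lemma1} as the case where the shift map becomes a bijection. Your observation that the reformulated expression is a red herring for \emph{this} proposition is well taken.
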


\begin{proof}
Let $u^{\varepsilon}=(u_{1}^{\varepsilon},\ldots,u_{m}^{\varepsilon}) \in \mathcal{U}^{\varepsilon}$ be processes giving rise to admissible exercise
policies $Z_{t}^{\varepsilon}=(Z_{t}^{1}+\varepsilon,\ldots,Z_{t}^{m}+\varepsilon)$ at time $t$. Let $u\in \mathcal{U}$ be the processes giving rise to
admissible exercise policies $Z_{t}$ at time $t$. Since the exercise policies $Z_{t}^{\varepsilon}$ arising from $u^{\varepsilon}$ are admissible
and must satisfy the effective volume constraint we have that $\mathcal{U}^{\varepsilon}\subseteq \mathcal{U}$. Also, for an arbitrary
admissible $s \mapsto Z_{s}^{\varepsilon}$ on $(t,T]$, define an associated $\check{Z}$ as
\begin{equation}
\label{checkZ}
\check{Z}_{s}=Z_{s}^{\varepsilon}-\varepsilon,
\end{equation}
for all $s \in (t,T]$. With this in mind, we proceed by expressing the marginal value as
\begin{eqnarray}
\frac{\partial V(t,z,p)}{\partial z^{j}}&=&-A^{j}(p) \nonumber \\
&&+\: \lim_{\varepsilon \to 0}\frac{1}{\varepsilon}\left\{\sup_{u^{\varepsilon}}\mathbf{E}\left[e^{-r(T-t)}\left(A^{j}(P(X_{T}))Z_{j}^{\varepsilon}(T)+\sum_{\substack{l=1 \\ l\neq j}}^{m}A^{l}(P(X_{T}))Z^{l}(T)\right) \right.\right. \nonumber \\
&&-\: \int_{t}^{T}\left[-re^{-r(s-t)}\left(A^{j}(P(X_{s}))Z_{j}^{\varepsilon}(s)+\sum_{\substack{l=1\\ l\neq j}}^{m}A^{l}(P(X_{s}))Z^{l}(s)\right) \right. \nonumber \\
&&+\: e^{-r(s-t)}\sum_{i=1}^{n}\left(Z_{j}^{\varepsilon}(s)c_{ji}+\sum_{\substack{l=1 \\ l\neq j}}^{m}Z^{l}(s)c_{li}\right)\alpha_{i}(s,X_{s})\nonumber\\
&&+\:\sum_{k=1}^{n}\int_{\R}e^{-r(s-t)}\left(Z_{j}^{\varepsilon}(s)+\sum_{\substack{l=1 \\ l\neq j}}^{m}Z^{l}(s)\right)\nonumber\\
&&\times\left.\sum_{v=1}^{m}q_{lv}P^{v}(\gamma^{(k)}(s,X_{s^{-}},\xi^{k}))\nu^{k}(d\xi^{k})\right]ds\left. \vphantom{\int_{t}^{T}} \right| \left. \vphantom{\int_{t}^{T}} Z_t=z, \ X_t=x\right] \nonumber \\
&&-\:\sup_{u}\mathbf{E}\left[e^{-r(T-t)}\sum_{l=1}^{m}A^{l}(P(X_{T}))Z^{l}(T) \right. \nonumber \\
&&-\: \int_{t}^{T}\left[-re^{-r(s-t)}\sum_{l=1}^{m}A^{l}(P(X_{s}))Z^{l}(s)+\: e^{-r(s-t)}\sum_{i=1}^{n}\sum_{l=1}^{m}Z^{l}(s)c_{li}\alpha_{i}(s,X_{s})\right.\nonumber\\
&&+\:\sum_{k=1}^{n}\int_{\R}e^{-r(s-t)}\sum_{l=1}^{m}Z^{l}(s)\nonumber\\
&&\times\left.\left.\sum_{v=1}^{m}q_{lv}P^{v}(\gamma^{(k)}(s,X_{s^{-}},\xi^{k}))\nu^{k}(d\xi^{k})\right]ds\left. \vphantom{\int_{t}^{T}} \right| \left. \vphantom{\int_{t}^{T}} Z_t=z, \ X_t=x\right]\right\}.
\end{eqnarray}
By collecting the terms containing $Z^{\varepsilon}$ in $\sup_{{u}^{\varepsilon}}$ and taking out the $j$th term in the supremum over $u$, we obtain
\begin{eqnarray}\label{vzcoll}
\frac{\partial V(t,z,p)}{\partial z^{j}}&=&-A^{j}(p) \nonumber\\
&&+\:\lim_{\varepsilon \to 0}\frac{1}{\varepsilon}\left\{\sup_{u^{\varepsilon}}\mathbf{E}\left[e^{-r(T-t)}A^{j}(P(X_{T}))Z_{j}^{\varepsilon}(T)\right.\right.\nonumber\\
&&-\:\int_{t}^{T}\left[-re^{-r(s-t)}A^{j}(P(X_{s}))Z_{j}^{\varepsilon}(s)+e^{-r(s-t)}\sum_{i=1}^{n}Z_{j}^{\varepsilon}(s)c_{ji}\alpha_{i}(s,X_{s})\right.\nonumber\\
&&+\:\left.\sum_{k=1}^{n}\int_{\R}e^{-r(s-t)}Z_{j}^{\varepsilon}(s)\sum_{v=1}^{m}q_{lv}P^{v}(\gamma^{(k)}(s,X_{s^{-}},\xi^{k}))\nu^{k}(d\xi^{k})\right]ds\left. \vphantom{\int_{t}^{T}} \right| \left. \vphantom{\int_{t}^{T}} Z_t=z, \ X_t=x\right]\nonumber\\
&&+\sup_{u^{\varepsilon}}I_{1}-\sup_{u}I_{1}\nonumber\\
&&-\sup_{u}\mathbf{E}\left[e^{-r(T-t)}A^{j}(P(X_{T}))Z^{j}(T)\right.\nonumber\\
&&-\:\int_{t}^{T}\left[-re^{-r(s-t)}A^{j}(P(X_{s}))Z^{j}(s)+e^{-r(s-t)}\sum_{i=1}^{n}Z^{j}(s)c_{ji}\alpha_{i}(s,X_{s})\right.\nonumber\\
&&+\:\left.\sum_{k=1}^{n}\int_{\R}e^{-r(s-t)}Z^{j}(s)\sum_{v=1}^{m}q_{lv}P^{v}(\gamma^{(k)}(s,X_{s^{-}},\xi^{k}))\nu^{k}(d\xi^{k})\right.]ds\left. \vphantom{\int_{t}^{T}} \right| \left. \vphantom{\int_{t}^{T}} Z_t=z, \ X_t=x\right]\Bigg\},
\end{eqnarray}
where
\begin{eqnarray}
I_{1}&:=&\mathbf{E}\left[e^{-r(T-t)}\sum_{\substack{l=1 \\ l\neq j}}^{m}A^{l}(P(X_{T}))Z^{l}(T) \right. \nonumber \\
&&-\: \int_{t}^{T}\left[-re^{-r(s-t)}\sum_{\substack{l=1\\ l\neq j}}^{m}A^{l}(P(X_{s}))Z^{l}(s)+e^{-r(s-t)}\sum_{i=1}^{n}\sum_{\substack{l=1 \\ l\neq j}}^{m}Z^{l}(s)c_{li}\alpha_{i}(s,X_{s})\right.\nonumber\\
&&+\sum_{k=1}^{n}\int_{\R}e^{-r(s-t)}\sum_{\substack{l=1 \\ l\neq j}}^{m}Z^{l}(s)\nonumber \\
&&\left. \times\sum_{v=1}^{m}q_{lv}P^{v}(\gamma^{(k)}(s,X_{s^{-}},\xi^{k}))\nu^{k}(d\xi^{k})\right]ds\left. \vphantom{\sum_{\substack{l=1 \\ l\neq j}}^{m}} \right| \left. \vphantom{\sum_{\substack{l=1 \\ l\neq j}}^{m}} Z_t=z, \ X_t=x\right].
\end{eqnarray}
Furthermore, define
\begin{eqnarray}
I_{\varepsilon}&:=&\mathbf{E}\left[e^{-r(T-t)}A^{j}(P(X_{T}))Z_{j}^{\varepsilon}(T)\right.\nonumber\\
&&-\:\int_{t}^{T}\left[-re^{-r(s-t)}A^{j}(P(X_{s}))Z_{j}^{\varepsilon}(s)+e^{-r(s-t)}\sum_{i=1}^{n}Z_{j}^{\varepsilon}(s)c_{ji}\alpha_{i}(s,X_{s})\right.\nonumber\\
&&+\:\sum_{k=1}^{n}\int_{\R}e^{-r(s-t)}Z_{j}^{\varepsilon}(s)\nonumber\\
&&\times\left.\sum_{v=1}^{m}q_{lv}P^{v}(\gamma^{(k)}(s,X_{s^{-}},\xi^{k}))\nu^{k}(d\xi^{k})\right]ds\left. \vphantom{\int_t^T} \right| \left. \vphantom{\int_t^T} Z_t=z, \ X_t=x\right],
\end{eqnarray}
and
\begin{eqnarray}
I_{0}&:=&\mathbf{E}\left[e^{-r(T-t)}A^{j}(P(X_{T}))(Z^{j}(T)+\varepsilon)\right.\nonumber\\
&&-\:\int_{t}^{T}\left[-re^{-r(s-t)}A^{j}(P(X_{s}))(Z^{j}(s)+\varepsilon)+e^{-r(s-t)}\sum_{i=1}^{n}(Z^{j}(s)+\varepsilon)c_{ji}\alpha_{i}(s,X_{s})\right.\nonumber\\
&&+\:\sum_{k=1}^{n}\int_{\R}e^{-r(s-t)}\left(Z^{j}(s)+\varepsilon\right)\nonumber\\
&&\times\left.\sum_{v=1}^{m}q_{lv}P^{v}(\gamma^{(k)}(s,X_{s^{-}},\xi^{k}))\nu^{k}(d\xi^{k})\right.]ds\left. \vphantom{\int_t^T} \right| \left. \vphantom{\int_t^T} Z_t=z, \ X_t=x\right].
\end{eqnarray}
Then we can write \eqref{vzcoll} as
\begin{eqnarray}
\frac{\partial V(t,z,p)}{\partial z^{j}}&=&-A^{j}(p)\nonumber \\
&&+\:\lim_{\varepsilon \to 0}\frac{1}{\varepsilon}\left\{\sup_{u^{\varepsilon}}I_{1}-\sup_{u}I_{1}+\sup_{u^{\varepsilon}}I_{\varepsilon}-\sup_{u}I_{0}+\varepsilon \mathbf{E}\left[e^{-r(T-t)}A^{j}(P(X_{T}))\right.\right.\nonumber\\
&&-\:\int_{t}^{T}\left[-re^{-r(s-t)}A^{j}(P(X_{s}))+e^{-r(s-t)}\sum_{i=1}^{n}c_{ji}\alpha_{i}(s,X_{s})\right.\nonumber\\
&&+\:\sum_{k=1}^{n}\int_{\R}e^{-r(s-t)}\sum_{v=1}^{m}q_{lv}\nonumber\\
&&\times\left.P^{v}(\gamma^{(k)}(s,X_{s^{-}},\xi^{k}))\nu^{k}(d\xi^{k})\right]ds\left. \vphantom{\int_t^T} \right| \left. \vphantom{\int_t^T} Z_t=z, \ X_t=x\right]\Bigg\}.
\end{eqnarray}
Since $U^{\varepsilon}\subseteq U$ we have that $\sup_{u^{\varepsilon}}I_{1}-\sup_{u}I_{1}\leq 0$. By \eqref{checkZ} there is an injective map between each functional $I_{\varepsilon}$ and $I_{0}$ for arbitrary $Z^{\varepsilon}$ such that $I_{\varepsilon} \hookrightarrow I_{0}$, hence $\sup_{u^{\varepsilon}}I_{\varepsilon}-\sup_{u}I_{0}\leq 0$. Consequently,
\begin{eqnarray}
\label{Vzineq}
\frac{\partial V(t,z,p)}{\partial z^{j}}&\leq&-A^{j}(p)+\mathbf{E}\bigg[e^{-r(T-t)}A^{j}(P(X_{T}))\nonumber\\
&&-\:\int_{t}^{T}\bigg[-re^{-r(s-t)}A^{j}(P(X_{s}))+e^{-r(s-t)}\sum_{i=1}^{n}c_{ji}\alpha_{i}(s,X_{s})\nonumber\\
&&+\:\sum_{k=1}^{n}\int_{\R}e^{-r(s-t)}\sum_{v=1}^{m}q_{lv}\nonumber\\
&&\times \left.P^{v}(\gamma^{(k)}(s,X_{s^{-}},\xi^{k}))\nu^{k}(d\xi^{k})\right]ds\left. \vphantom{\int_t^T} \right| \left. \vphantom{\int_t^T} Z_t=z, \ X_t=x\right].
\end{eqnarray}
By applying the It\^{o} formula to the process $s \mapsto e^{-rs}A^{j}(P(X_{s}))$ and taking conditional expectation with respect to $\mathcal{F}_{s}$, we find that the right-hand side of \eqref{Vzineq} is zero.
\end{proof}
In the case of a one-dimensional decision variable, i.e. an option of one commodity, this states an intuitively obvious result, namely that in the presence of an effective volume constraint, the usage of the option will lower its value. Note that if $M>\bar{u}T$ there is an $\varepsilon$ such that $\mathcal{U}^{\varepsilon}=\mathcal{U}$ and the map $(\ref{checkZ})$ is bijective. Hence, we obtain the result of Corollary $\ref{lemma1}$.

\section{The HJB-equation}

In the previous section, we studied the dynamic programming problem \eqref{def:valuefunction} first in the absence of an effective volume constraint for commodity $i$. We showed that in this case the optimal exercise rule can be determined explicitly and that the option does not lose value if used for this commodity. We also considered the problem in the presence of an effective volume constraint and showed that in this case it loses value when used. In this section, we determine an optimal exercise rule in the presence of an effective volume constraint. To this end, we first derive the associated HJB-equation.

For the reminder of the paper, we change the notation on the value function. Since $rank(B)=m$, from now on we may write $V$ explicitly as a function
of the factors $X$ instead of the price $P(x)=Bx$, that is, we write $V(t,z,x)$ instead of $V(t,z,p)$ where the domain of $V$ is modified accordingly.

\subsection{Necessary conditions} We derive now the HJB-equation of the problem \eqref{def:valuefunction}. To this end, assume that value $V$ exists. Then the Bellman principle of optimality yields
\begin{equation}\label{Bellman}
\begin{split}
V(t,z,x)=\sup_{u \in \mathcal{U}}\mathbf{E}&\left[\int_{t}^{w}e^{-r(s-t)}\sum_{l=1}^{m}A^{l}(P(X_{s}))u^{l}_{s}ds\right.\\&\left.+e^{-r(w-t)}V(w,Z_{w},X_{w}) \right.\left. \vphantom{\int_t^T} \right| \left. \vphantom{\int_t^T} Z_t=z, \ X_t=x\right],
\end{split}
\end{equation}
for all times $0 \leq t<w \leq T$. Rewrite the equation \eqref{Bellman} as
\begin{equation}\label{Bellman2}
\begin{split}
\sup_{u \in \mathcal{U}}\mathbf{E}&\left[\int_{t}^{w}e^{-rs}\sum_{l=1}^{m}A^{l}(P(X_{s}))u^{l}_{s}ds+e^{-rw}V(w,Z_{w},X_{w})\right.\\&\left.-e^{-rt}V(t,Z_{t},X_{t})\right.\left. \vphantom{\int_t^T} \right| \left. \vphantom{\int_t^T} Z_t=z, \ X_t=x\right]=0.
\end{split}
\end{equation}
Furthermore, assume that $V \in \mathcal{C}^{1,1,2}(\mathcal{S})$. Then we obtain by the It\^o formula
\begin{eqnarray}
\label{bellmanito}
\lefteqn{e^{-rw}V(w,Z_{w},X_{w})-e^{-rt}V(t,Z_{t},X_{t})=\int_{t}^{w}d(e^{-rs}V(s,Z_{s},X_{s})}\nonumber \\
&=&\int_{t}^{w}\big[ e^{-rs}V_{s}(s,Z_{s},X_{s})-re^{-rs}V(s,Z_{s},X_{s})+e^{-rs}\sum_{i=1}^{n}V_{x^{i}}(s,Z_{s},X_{s})\alpha_{i}(s,X_{s})\big]ds \nonumber \\
&&+\:\int_{t}^{w}e^{-rs}\frac{1}{2}\sum_{i,j}^{n}(\sigma \sigma^{T})_{ij}V_{x^{i}x^{j}}(s,Z_{s},X_{s})d\langle W^{i},W^{j}\rangle_s \nonumber \\
&&+\:\int_{t}^{w}e^{-rs}\sum_{l=1}^{m}V_{z^{l}}(s,Z_{s},X_{s})u_{s}^{l}ds\nonumber \\
&&+\:\int_{t}^{w}e^{-rs}\sum_{i=1}^{n}\sum_{l=1}^{m}V_{x^{i}}(s,Z_{s},X_{s})\sigma_{i}(s,X_s)dW_{s}\nonumber\\
&&+\:\int_{t}^{w}\sum_{k=1}^{n}\int_{\R}e^{-rs}\left[V(s,Z_{s},X_{s}+\gamma^{(k)}(s,X_{s},\xi^{k}))-V(s,Z_{s},X_{s})\right]N^{k}(ds,d\xi^{k}).
\end{eqnarray}
Here, $\sigma_{i}dW_{s}\equiv \sum_{j}\sigma_{ij}dW^{j}_{s}$ and
\begin{equation*}
x+\gamma^{k}=(x^{1}+\gamma^{k}_{1},\ldots,x^{n}+\gamma^{k}_{n}),
\end{equation*}
where $\gamma^{k}_{j}$ is the $jk$:th element in the matrix $\bar{\bar{\gamma}}$. By compensating the Poissonian stochastic integral in \eqref{bellmanito}, we find under suitable $L^2$-assumptions on $\sigma$ and $V_{x_i}$, see \cite{Applebaum}, Thrm. 5.3.6, that the Brownian and compensated Poissonian integrals in \eqref{bellmanito} are martingales. Thus the equation \eqref{Bellman2} yields
\begin{eqnarray}
\label{bellman3}
0&=&\sup_{u \in \mathcal{U}}\mathbf{E}\left[\int_{t}^{w}e^{-r(s-t)}\left[V_{s}(s,Z_{s},X_{s})+\sum_{i=1}^{n}V_{x^{i}}(s,Z_{s},X_{s})\alpha_{i}(s,X_{s})\right.\right.\nonumber\\
&&\:+\frac{1}{2}\sum_{i,j}^{n}(\sigma \sigma^{T})_{ij}V_{x^{i}x^{j}}(s,Z_{s},X_{s})\rho_{ij}-rV(s,Z_{s},X_{s})\nonumber\\
&&\:+\sum_{k=1}^{n}\int_{\R}\left[V(s,Z_{s},X_{s}+\gamma^{(k)}(s,X_{s},\xi^{k}))-V(s,Z_{s},X_{s})\right]\nu^{k}(d\xi^{k})\nonumber\\
&&\:+\left.\sum_{l=1}^{m}\left(A^{l}(P(X_{s}))+V_{Z^{l}}(s,Z_{s},X_{s})\right)u_{s}^{l}\right] ds\left. \vphantom{\int_t^T} \right| \left. \vphantom{\int_t^T} Z_t=z, \ X_t=x\right].
\end{eqnarray}
Define the integro-differential operator $\mathcal{L}$ on $\mathcal{C}^{1,1,2}(\mathcal{S})$ as
\begin{eqnarray}
\label{bellmanoperatorcompx}
\mathcal{L}F(t,z,x)&=&F_{t}(t,z,x)+\sum_{i=1}^{n}\alpha_{i}(t,x)F_{x^{i}}(t,z,x)+\frac{1}{2}\sum_{i,j}^{n}(\sigma\sigma^{T})_{ij}F_{x^{i}x^{j}}(t,z,x)\rho_{ij}\nonumber\\
&&\:+\sum_{k=1}^{n}\int_{\R}\left[F(s,z,x+\gamma^{(k)}(s,x,\xi^{k}))-F(s,z,x)\right]\nu^{k}(d\xi^{k}),
\end{eqnarray}
and rewrite \eqref{bellman3} as
\begin{equation*}
\begin{split}
0=\sup_{u \in \mathcal{U}}\mathbf{E}&\left[\frac{1}{w-t}\int_{t}^{w}e^{-r(s-t)}\left[(\mathcal{L}-r)V(s,Z_s,X_s)\right.\right.\\&+\left.\sum_{l=1}^{m}\left(A^{l}(P(X_{s}))+V_{z^{l}}(s,Z_{s},X_{s})\right)u_{s}^{l} \right]ds \left. \vphantom{\int_t^T} \right| \left. \vphantom{\int_t^T} Z_t=z, \ X_t=x\right].
\end{split}
\end{equation*}
Under appropriate conditions on $V$, see, e.g. \cite{Fleming Soner}, we can pass to the limit $w\downarrow t$ and obtain the HJB-equation
\begin{equation}
\label{HJBx}
\left(\mathcal{L}-r\right)V(t,z,x)+\sup_{u}\left\{\sum_{l=1}^{m}(A^{l}(P(x))+V_{z^{l}}(t,z,x))u^{l}(t)\right\}=0,
\end{equation}
where the $u$ varies over the set of $\R^m$-valued functions defined on $[0,T]$ satisfying the conditions
\begin{equation*}
0\leq u^l(t) \leq \bar{u}^l , \ \int_0^t u^l(s)=z^l, \ \int_0^T u^l(s)ds \leq M^l,
\end{equation*}
for all $l=1,\dots,m$ and $t\in[0,T]$.

We observe from the equation \eqref{HJBx} that the sign of quantity $A^{l}(P(x))+V_{z^{l}}(t,z,x)$, $l=1,\dots,m$, determines whether the option should be exercised or not. From economic point of view, this quantity has a natural interpretation. Indeed, for a given commodity $l$, the function $A^{l}(P(\cdot))$ gives the instantaneous exercise payoff whereas the function $V_{z^l}$ measures the marginal lost option value. If the payoff dominates the lost option value for a given point $(t,z,x)$ and commodity $l$, the option should exercised at the full rate. That is, for each commodity $l$, the option should exercised according to the rule
\begin{equation*}
\hat{u}_{t}^{l}=
\begin{cases}
\bar{u}^{l} & \text{if } A^{l}(P(x))>-V_{z^{l}}(t,z,x), \\
0 & \text{if } A^{l}(P(x))\leq-V_{z^{l}}(t,z,x).
\end{cases}
\end{equation*}
We also point out that this rule is in line with the case when there is no effective volume constraint. In this case, the marginal lost option value is zero and, consequently, the option is used every time it yields a positive payoff. In particular, we find that the presence of an effective volume constraint postpones the optimal exercise of the option for a given commodity $l$.

\subsection{Sufficient conditions} In this subsection we consider sufficient conditions for a given function to coincide with the value function \eqref{def:valuefunction}. These conditions are given by the following verification theorem.
\begin{theorem}\label{sufficientcondition}
Assume that a function $F:\mathcal{S}\longrightarrow \R$ satisfies the following conditions:
\begin{itemize}
\item[(i)] $F(T,\cdot,\cdot)\equiv 0$, $F\in \mathcal{C}^{1,1,2}(\mathcal{S})$,
\item[(ii)]$(\mathcal{L}-r)F(t,z,x)+\sum_{l=1}^{m}(A^{l}(P(x))+F_{z^{l}}(t,z,x))u_{t}^{l}\leq 0$ for all $(t,z,x) \in \mathcal{S}$ and $u \in \mathcal{U}$, where $\mathcal{L}$ is defined in \eqref{bellmanoperatorcompx},
\item[(iii)]The processes
 \begin{itemize}
\item[a)] $\theta\mapsto \int_{0}^{\theta}e^{-rs}\sum_{i=1}^{n}F_{x^{i}}(s,Z_{s},X_{s})\sigma_{i}(s,X_{s})dW_s$,
\item[b)] $\theta\mapsto \int_{0}^{\theta}\sum_{k=1}^{n}\int_{\R}e^{-rs}\left[F(s,Z_{s},X_{s}+\gamma^{(k)}(s,\xi^{k}))-F(s,Z_{s},X_{s})\right]\tilde{N}^{k}(ds,d\xi^{k})$,
\end{itemize}
are martingales with respect to $\mathbb{F}$.
\end{itemize}
Then $F$ dominates the value $V$. In addition, if there exist an admissible $\mathring{u}$ such that
\begin{eqnarray}
\label{optimalcondition}
\lefteqn{(\mathcal{L}-r)F(t,z,x)+\sup_{u}\left[\sum_{l=1}^{m}(A^{l}(p(x))+F_{z^{l}}(t,z,x))u_{t}^{l}\right]}\nonumber\\
&=&(\mathcal{L}-r)F(t,z,x)+\sum_{l=1}^{m}(A^{l}(p(x))+F_{z^{l}}(t,z,x))\mathring{u}_{t}^{l}=0,
\end{eqnarray}
for all $(t,z,x) \in \mathcal{S}$, then $\mathring{u}=u^{*}$ and the function $F$ coincides with the value $V$.
\end{theorem}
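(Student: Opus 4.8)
The plan is to prove this standard verification result in two stages, mimicking the derivation of the HJB-equation backwards. First I would show that any $F$ satisfying (i)--(iii) dominates $V$, and then that the additional condition \eqref{optimalcondition} forces equality and identifies the optimizer.

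\textbf{Step 1: $F\geq V$.} Fix an arbitrary admissible $u\in\mathcal{U}$, let $Z$ be the associated total volume, and fix $(t,z,x)\in\mathcal{S}$. Apply the It\^o formula to $s\mapsto e^{-rs}F(s,Z_s,X_s)$ on $[t,T]$, exactly as in \eqref{bellmanito} but with $F$ in place of $V$. Using $Z_T=z+\int_t^T u_s\,ds$ and condition (i) that $F(T,\cdot,\cdot)\equiv 0$, the left endpoint gives
\begin{equation*}
-e^{-rt}F(t,z,x)=\int_t^T e^{-rs}\big[(\mathcal{L}-r)F(s,Z_s,X_s)+\textstyle\sum_{l=1}^m F_{z^l}(s,Z_s,X_s)u_s^l\big]ds+(\text{mart.\ terms}),
\end{equation*}
where the martingale terms are precisely the integrals in (iii)(a) and (iii)(b) after compensating the Poisson integral via \eqref{Ncompensated}. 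Now add and subtract $\int_t^T e^{-rs}\sum_l A^l(P(X_s))u_s^l\,ds$, rearrange to solve for $e^{-rt}F(t,z,x)$, multiply by $e^{rt}$, take conditional expectation given $Z_t=z$, $X_t=x$, and use (iii) to kill the martingale terms. This yields
\begin{equation*}
F(t,z,x)=\mathbf{E}\Big[\int_t^T e^{-r(s-t)}\textstyle\sum_{l=1}^m A^l(P(X_s))u_s^l\,ds\Big| \cdots\Big]-\mathbf{E}\Big[\int_t^T e^{-r(s-t)}\big[(\mathcal{L}-r)F+\textstyle\sum_l(A^l(P(X_s))+F_{z^l})u_s^l\big]ds\Big| \cdots\Big].
\end{equation*}
By condition (ii) the integrand of the second expectation is $\leq 0$ pointwise, so $F(t,z,x)\geq J(t,z,p,u)$. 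Taking the supremum over $u\in\mathcal{U}$ gives $F\geq V$.

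\textbf{Step 2: equality and optimality.} If \eqref{optimalcondition} holds with admissible $\mathring u$, then running the same computation with $u=\mathring u$ makes the second expectation vanish identically, so $F(t,z,x)=J(t,z,p,\mathring u)\leq V(t,z,p)\leq F(t,z,x)$; hence all inequalities are equalities, $F=V$ on $\mathcal{S}$, and $\mathring u$ attains the supremum in \eqref{def:valuefunction}, i.e.\ $\mathring u=u^*$.

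\textbf{Main obstacle.} The substantive points are the integrability/regularity hypotheses rather than the algebra: one must ensure the It\^o formula applies (covered by $F\in\mathcal{C}^{1,1,2}(\mathcal{S})$), that the stochastic integrals are genuinely martingales so their conditional expectations vanish (this is exactly what (iii) postulates, so it is assumed rather than proved, paralleling the earlier $L^2$-argument for $Z_s^l c_{li}\sigma_{ij}$ bounded by $(\bar u^l T c_{li})^2\mathbf{E}\int_0^T\sigma_{ij}^2\,ds$), and that all the conditional expectations appearing are finite so the rearrangement is legitimate. A minor care point is the admissibility bookkeeping: the class of $u$ over which the pointwise $\sup_u$ in \eqref{optimalcondition} runs must be compatible with the dynamic constraints $\int_0^t u^l(s)ds=z^l$, $\int_0^T u^l(s)ds\leq M^l$, so that the feedback control $\mathring u$ indeed generates an element of $\mathcal{U}$; once that is granted the argument closes. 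I would also remark that since only the sign of $A^l(P(x))+F_{z^l}(t,z,x)$ matters in the supremum, the natural candidate for $\mathring u$ is the bang-bang rule $\hat u^l$ displayed after \eqref{HJBx}, which is admissible precisely when the resulting $Z$ respects the volume constraint.
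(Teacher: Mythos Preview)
Your proposal is correct and follows essentially the same route as the paper: apply It\^o's formula to $s\mapsto e^{-rs}F(s,Z_s,X_s)$, use the terminal condition (i), compensate the Poisson integral, take conditional expectations and invoke (iii) to eliminate the martingale terms, then apply (ii) for the inequality and \eqref{optimalcondition} for equality. The only cosmetic difference is that you explicitly add and subtract the running payoff $\sum_l A^l(P(X_s))u_s^l$ before invoking (ii), whereas the paper folds this into a single step; your additional remarks on admissibility and the bang-bang candidate $\hat u$ go slightly beyond what the paper's proof spells out but are consistent with it.
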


\begin{proof}
Let $u \in \mathcal{U}$ and $t \in [0,T]$. By applying the It\^o formula to the process $ t \mapsto e^{-rt}F(t,Z_{t},X_{t})$, we find in the same way as in \eqref{bellmanito} that
\begin{eqnarray*}
\lefteqn{e^{-rT}V(T,Z_{T},X_{T})-e^{-rt}F(t,Z_{t},X_{t})=\int_{t}^{T}d(e^{-rs}F(s,Z_{s},X_{s}))}\nonumber \\
&=&\int_{t}^{T}\big[ e^{-rs}F_{s}(s,Z_{s},X_{s})-re^{-rs}F(s,Z_{s},X_{s})+e^{-rs}\sum_{i=1}^{n}F_{x^{i}}(s,Z_{s},X_{s})\alpha_{i}(s,X_{s})\big]ds \nonumber \\
&&+\:\int_{t}^{T}e^{-rs}\frac{1}{2}\sum_{i,j}^{n}(\sigma \sigma^{T})_{ij}(s,X_s)F_{x^{i}x^{j}}(s,Z_{s},X_{s})d\langle W^{i},W^{j}\rangle_s \nonumber \\
&&+\:\int_{t}^{T}e^{-rs}\sum_{l=1}^{m}F_{z^{l}}(s,Z_{s},X_{s})u_{s}^{l}ds\nonumber \\
&&+\:\int_{t}^{T}e^{-rs}\sum_{i=1}^{n}\sum_{l=1}^{m}F_{x^{i}}(s,Z_{s},X_{s})\sigma_{i}(s,X_s)dW_{s}\nonumber\\
&&+\:\int_{t}^{T}\sum_{k=1}^{n}\int_{\R}e^{-rs}\left[F(s,Z_{s},X_{s}+\gamma^{(k)}(s,X_{s},\xi^{k}))-F(s,Z_{s},X_{s})\right]N^{k}(ds,d\xi^{k}).
\end{eqnarray*}
By using the assumption (i), definition of the operator $\mathcal{L}$ and the equation \eqref{Ncompensated}, we obtain the equality
\begin{eqnarray*}
\lefteqn{0=e^{-rt}F(t,Z_{t},X_{t})+\int_{t}^{T}e^{-rs}\left(\mathcal{L}-r\right)F(s,Z_{s},X_{s})ds+\int_{t}^{T}e^{-rs}\sum_{l=1}^{m}F_{z^{l}}(s,Z_{s},X_{s})u_{s}^{l}ds}\nonumber \\
&&+\:\int_{t}^{T}e^{-rs}\sum_{i=1}^{n}F_{x^{i}}(s,Z_{s},X_{s})\sigma_{i}dW\nonumber\\
&&+\:\int_{t}^{T}\sum_{k=1}^{n}\int_{\R}e^{-rs}\left[F(s,Z_{s},X_{s}+\gamma^{(k)}(s,X_{s},\xi^{k}))-F(s,Z_{s},X_{s})\right]\tilde{N}^{k}(ds,d\xi^{k}).
\end{eqnarray*}
Conditioning up to time $t$ and the assumption (iii) yields
\begin{eqnarray*}
0&=&e^{-rt}F(t,Z_{t},X_{t})+\mathbf{E}\left[\int_{t}^{T}e^{-rs}\left(\mathcal{L}-r\right)F(s,Z_{s},X_{s})ds\right.\left. \vphantom{\int_t^T} \right| \left. \vphantom{\int_t^T} Z_t=z, \ X_t=x\right]\nonumber\\
&&+\:\mathbf{E}\left[\int_{t}^{T}e^{-rs}\sum_{l=1}^{m}F_{z^{l}}(s,Z_{s},X_{s})u_{s}^{l}ds\right.\left. \vphantom{\int_t^T} \right| \left. \vphantom{\int_t^T} Z_t=z, \ X_t=x\right].
\end{eqnarray*}
By assumption (ii), we get
\begin{equation}
\label{bellmanitoF2conditonii}
0\leq e^{-rt}F(t,Z_{t},X_{t})-\mathbf{E}\left[\int_{t}^{T}e^{-rs}\sum_{l=1}^{m}A^{l}(P(X_{s}))u_{s}^{l}ds\right.\left. \vphantom{\int_t^T} \right| \left. \vphantom{\int_t^T} Z_t=z, \ X_t=x\right].
\end{equation}
for all $u \in \mathcal{U}$. Thus, the first claim follows. Now, if there exist an admissible $\mathring{u}$ such that \eqref{optimalcondition} holds, then we would get equality in \eqref{bellmanitoF2conditonii}, i.e. for all $\omega$
\begin{equation*}
0=e^{-rt}F(t,Z_{t},X_{t})-\mathbf{E}\left[\int_{t}^{T}e^{-rs}\sum_{l=1}^{m}A^{l}(P(X_{s}))\mathring{u}_{s}^{l}ds\right.\left. \vphantom{\int_t^T} \right| \left. \vphantom{\int_t^T} Z_t=z, \ X_t=x\right].
\end{equation*}
The conclusion follows.
 \end{proof}

\section{Examples} In this section we consider three examples. These examples illustrate two main issues. First, we compare two one-factor models (Example 1 and Example 2) with underlying Ornstein-Uhlenbeck factor dynamics, where the first model has a single Brownian driver whereas the other is driven by a sum of a Brownian motion and a compound Poisson process. To illustrate the effect of the jumps, the parameters of the factor dynamics are fixed such that the volatilities and the long term means are matched. In the third example, we study a two-factor model with underlying Ornstein-Uhlenbeck factor dynamics. As we will observe, the boundary conditions in the factor price dimensions are a delicate matter in this case. In all these examples, the
aim is to find
\begin{equation}
\label{valufunctionex2and3}
V(t,z,x)=\sup_{u\in \mathcal{U}}\mathbf{E}\left[\int_{t}^{T}e^{-r(s-t)}(P(X_{s})-K)u_{s}ds\right.\left| \vphantom{\int_{t}^{T}} \right. \left. \vphantom{\int_{t}^{T}} Z_t=z, \ X_t=x \right].
\end{equation}
The boundary conditions in the $x$-direction are found by the same arguments as in \cite{BLN}. The terminal condition is
\begin{equation}
\label{terminalBC}
V(T,z,x)=0,
\end{equation}
for all $z$ and $x$. This follows directly from the definition of the value function. Furthermore, the boundary condition in the $z$-direction, i.e. $z=M$, is
\begin{equation}
\label{optionalityBC}
V(t,M,x)=0,
\end{equation}
for all $t$ and $x$. This follows from the fact that when $z=M$ the only exercise rule available in $\mathcal{U}$ is the trivial one. The conditions \eqref{terminalBC} and \eqref{optionalityBC} hold for all three examples below.

\subsection{Example 1}
Let the factor dynamics $X$ be given by
\begin{equation}\label{browniandrift}
dX_{s}=\kappa(\mu-X_{s})ds+\sigma dW_{s}, \quad X_{t}=x
\end{equation}
and $P(x)=x$, where $s>t$. Then it is well known that at time $s>t$, the solution
\begin{equation}
\label{OUexplicitex1}
X_{s}=(x-\mu)e^{-\kappa(s-t)}+\mu+\sigma\int_{t}^{s}e^{-\kappa(s-v)}dW_{v}, \quad X_{t}=x.
\end{equation}
Furthermore,
\begin{equation*}
X_{s} \thicksim \mathcal{N}\left(\mu+e^{-\kappa(s-t)}(x-\mu), \frac{\sigma^{2}}{2\kappa}(1-e^{-2\kappa(s-t)})\right).
\end{equation*}
With this specification, the value function \eqref{valufunctionex2and3} is given as a solution to the HJB-equation
\begin{equation}
\label{hjbex1}
V_{t}(t,z,x)+\kappa(\mu-x)V_{x}(t,z,x)+\frac{1}{2}\sigma V_{xx}(t,z,x)-rV(t,z,x)+\sup_{u\in \mathcal{U}}[(x-K+V_{z}(t,z,x))u_{t}]=0,
\end{equation}
with boundary conditions conditions in $x$-direction given by
\begin{equation}
\label{BCxmaxex1}
V(t,z,x_{max})=\bar{u}\int_{t}^{\tau}e^{-r(s-t)}[(x_{max}-\mu)e^{-\kappa(s-t)}+\mu]ds,
\end{equation}
and
\begin{equation}
\label{BCxminex1}
V(t,z,x_{min})=\bar{u}\int_{\theta}^{T}e^{-r(s-t)}[(x_{min}-\mu)e^{-\kappa(s-t)}+\mu]ds.
\end{equation}
We remark that this is similar to the example in \cite{BLN}, Appendix A. However, we consider an arithmetic OU-process whereas in \cite{BLN} the dynamics are given by an \emph{exponential} OU-process.

\subsection{Example 2}
To illustrate the effect of the jumps in the factor dynamics, we add in this example a compound Poisson process to the factor dynamics defined in \eqref{browniandrift} and match the expectation and volatility with Example 1. More precisely, consider the factor dynamic given by the It\^{o} equation
\begin{equation*}
dX_{s}=\kappa(\tilde{\mu}-X_{s})ds+\tilde{\sigma} dW_{s}+dY_{s},
\end{equation*}
where the compound Poisson process $Y_{s}$ has L\'evy measure $\nu(dy)=f\alpha e^{-\alpha y}\mathbf{1}_{\{y\geq 0\}}dy$ with $f,\alpha>0$. This equation can be written as
\begin{equation}
\label{browniandriftjump}
dX_{s}=\kappa(\mu+\int_{0}^{\infty}y\nu(dy)-X_{s})ds+\tilde{\sigma} dW_{s}+\int_{\R}y\tilde{N}(dy,ds),
\end{equation}
where the compensator
\begin{equation*}
\int_{0}^{\infty}y\nu(dy)=\frac{f}{\alpha}.
\end{equation*}
With this specification, the value function \eqref{valufunctionex2and3} is given as a solution to the HJB-equation
\begin{eqnarray}
\label{hjbex2}
0&=&V_{t}(t,z,x)+\kappa(\tilde{\mu}-x)V_{x}(t,z,x)+\frac{1}{2}\tilde{\sigma} V_{xx}(t,z,x)+\int_{0}^{\infty}[V(t,z,x+y)-V(t,z,x)]f\alpha e^{-\alpha y}dy\nonumber\\
&&\:-rV(t,z,x)+\sup_{u\in \mathcal{U}}[(x-K+V_{z}(t,z,x))u_{t}],
\end{eqnarray}
with boundary conditions in $x$-direction given by
\begin{equation}
\label{BCxmaxex2}
V(t,z,x_{max})=\bar{u}\int_{t}^{\tau}e^{-r(s-t)}[(x_{max}-\tilde{\mu})e^{-\kappa(s-t)}+\tilde{\mu}+\frac{f}{\kappa\alpha}(1-e^{-\kappa(s-t)})]ds,
\end{equation}
and
\begin{equation}
\label{BCxminex2}
V(t,z,x_{max})=\bar{u}\int_{\theta}^{T}e^{-r(s-t)}[(x_{max}-\tilde{\mu})e^{-\kappa(s-t)}+\tilde{\mu}+\frac{f}{\kappa\alpha}(1-e^{-\kappa(s-t)})]ds.
\end{equation}
The solution $X_{s}$ to \eqref{browniandriftjump} is given by
\begin{equation}
\label{OUexplicitex2}
X_{s}=(x-\frac{f}{\alpha}-\tilde{\mu})e^{-\kappa(s-t)}+\tilde{\mu}+\frac{f}{\alpha}+\tilde{\sigma}\int_{t}^{s}e^{-\kappa(s-v)}dW_{v}+\int_{t}^{s}\int_{\R}e^{-\kappa(s-v)}y\tilde{N}(dy,dv).
\end{equation}
It is easy to compute from the expression above that
\begin{equation*}
\mathbf{E}[X_{s}]=(x-\frac{f}{\alpha}-\tilde{\mu})e^{-\kappa(s-t)}+\tilde{\mu}+\frac{f}{\alpha},
\end{equation*}
and
\begin{equation*}
\Var(X_{s})=\mathbf{E}[(X_{s}-\mathbf{E}[X_{s}])^2]=\frac{1}{2\kappa}(1-e^{-2\kappa(s-t)})(\tilde{\sigma}^{2}+\frac{2f}{\alpha^{2}}).
\end{equation*}
To match the volatility and the long term mean in Example 1 and Example 2, we solve the equations above for $\tilde{\mu}$ and $\tilde{\sigma}$, when the expectation and variance is equal to that in Example 1. It follows that
\begin{equation*}
 \tilde{\mu}=\mu-\frac{f}{\alpha},
\end{equation*}
and
\begin{equation*}
\tilde{\sigma}=\sqrt{\sigma^{2}-\frac{2f}{\alpha^{2}}}.
\end{equation*}
Then the mean and total volatility in Example 1 and Example 2 will be the same. This is good for comparison reasons, which will be discussed more in the next section.

\subsection{Example 3}
The purpose of this example is illustrate the results when the price is driven by multiple factors. To this end, let $\beta^{1}, \beta^{2},\lambda_{1}, \mu^{1}, f, \kappa, K$ be non-negative constants. Consider the two-factor model
\begin{equation*}
dX(t) = \alpha(t,X(t))dt+\bar{\bar{\sigma}}(t,X(t))dW(t)+\int_{\mathbb{R}^2}\bar{\bar{\gamma}}(t,X(t),\xi)N(dt,d\xi),
\end{equation*}
where $dW_{t}=(dW^{1}_{t},0)$, $N(dt,d\xi)=(0,N^{2}(dt,d\xi^{(2)}))$ and the L\'evy measure is $\nu(dy)=(0,f\kappa e^{-\kappa y}\mathbf{1}_{\{y\geq0\}}dy)$. Here, $f$ is the jump frequency and $\kappa$ is the parameter of the exponentially distributed jumps. Furthermore,
\begin{equation*}
\alpha(t,X_{t})=(\mu^{1}-\beta^{1}X^{1}_{t},-\beta^{2} X^{2}_{t}),
\end{equation*}
\begin{equation*}
\bar{\bar{\sigma}}(t,X(t))=\left(\begin{array}{cc}
              \lambda^{1} & 0 \\
	      0& 0\\
	      \end{array} \right),
\end{equation*}
\begin{equation*}
\bar{\bar{\gamma}}(t,X(t))=\left(\begin{array}{cc}
              0 & 0 \\
	      0& \xi^{2}\\
	      \end{array} \right).
\end{equation*}
In component form we have,
\begin{equation*}
dX_{v}^{1}=(\mu^{1}-\beta^{1}X_{v}^{1})dv+\lambda^{1} dW_{v}^{1}, \quad X^{1}_{t}=x^{1},
\end{equation*}
and
\begin{equation*}
dX_{v}^{2}=-\beta^{2}X_{v}^{2}dv+\int_{R}\xi^{2}N^{2}(dv,d\xi^{2}), \quad X^{2}_{t}=x^{2}.
\end{equation*}
The solutions can be written as
\begin{equation}
\label{OUex1comp1explicit}
X_{s}^{1}=e^{-\beta^{1}(s-t)}x^{1}+\int_{t}^{s}\mu^{1}e^{-\beta^{1}(s-v)}+\int_{t}^{s}\lambda^{1}e^{-\beta^{1}(s-v)}dW_{v}^{1},
\end{equation}
and
\begin{equation}
\label{OUex1comp2explicit}
X_{s}^{2}=e^{-\beta^{2}(s-t)}x^{2}+\int_{t}^{s}\int_{\R}e^{-\beta^{2}(s-v)}\xi^{2}\tilde{N}^{2}(d\xi^{2},dv)+\int_{t}^{s}\int_{\R}e^{-\beta^{2}(s-v)}\xi^{2}\nu^{2}(dv,d\xi^{2}).
\end{equation}
To set up the valuation model, define the price function $P:\R^{2}\rightarrow \R$ as $P(x)=x^{1}+x^{2}$. Furthermore, let $Z$ be as in \eqref{def:Z} with $m=1$. The payoff is of call option type, i.e. $A(p)=p-K$. Then, the value function \eqref{def:valuefunction} reads as
\begin{equation*}
V(t,z,x)=\sup_{u\in \mathcal{U}}\mathbf{E}\left[ \int_{t}^{T}e^{-r(s-t)}(P(X_{t})-K)u_{s}ds\right.\left| \vphantom{\int_{t}^{T}} \right. \left. \vphantom{\int_{t}^{T}} Z_t=z, \ X_t=x \right].
\end{equation*}
From Proposition \ref{prop1}, we see that in the absence of an effective final volume constraint the optimal exercise policy $u^{*}$ is given by
\begin{equation*}
    u_{t}^{*}=
      \begin{cases}
      \bar{u} & \text{if } P(X_{t})>K, \\
       0 & \text{if } P(X_{t})\leq K.
      \end{cases}
    \end{equation*}
for all $t \in [0,T]$. Hence, it is optimal to use the option whenever the swing yields a positive payoff. This is in line with \cite{BLN}, in which no jumps are considered.

Consider now the case with an effective volume constraint. The value function can be written in the component form as
\begin{equation}
\label{valufunctionex1 in X}
V(t,z,x^{1},x^{2})=\sup_{u\in \mathcal{U}}\mathbf{E}\left[ \int_{t}^{T}e^{-r(s-t)}(X^{1}_{s}+X^{2}_{s}-K)u_{s}ds\right.\left| \vphantom{\int_{t}^{T}} \right. \left. \vphantom{\int_{t}^{T}} Z_t=z, \ X_t=x \right].
\end{equation}
This function is given as the solution to the HJB-equation
\begin{eqnarray}
\label{explicitHJBxex1}
\lefteqn{V_{t}(t,z,x^1,x^2)+(\mu^{1}-\beta^{1}x^{1})V_{x^1}(t,z,x^1,x^2)-\beta^{2} x^{2}V_{x^{2}}(t,z,x^1,x^2)}\nonumber\\
&&+\int_{\R}\left(V(t,z,x^1,x^2+\xi^{2})-V(t,z,x^1,x^2)\right)\nu^{2}(d\xi^{2}) \nonumber\\
&&+\:\frac{1}{2}\lambda^{2}V_{x^1x^1}(t,z,x^1,x^2)-rV(t,z,x^1,x^2)+\sup_{u\in \mathcal{U}}[(x^1+x^2-K+V_{z}(t,z,x^1,x^2))u_{t}]=0.
\end{eqnarray}
With boundary conditions in $t$- and $z$-direction
\begin{equation*}
V(T,z,x^1,x^2)=0 \quad \text{and} \quad V(t,M,x^1,x^2)=0,
\end{equation*}
and boundary conditions in $x$-direction
\begin{equation}
\label{BCex1x1min}
V(t,z,x^1_{min},x^2),
\end{equation}
\begin{equation}
\label{BCex1x1max}
V(t,z,x^1_{max},x^2).
\end{equation}
To solve the problem \eqref{explicitHJBxex1}-\eqref{BCex1x1max}, we need to find the boundary conditions \eqref{BCex1x1min}-\eqref{BCex1x1max}. We assume that we only have positive finite jumps, i.e. $x^{2}\geq 0$ and that $0<x^1_{min}<<\mu^{1}$. That is, the problem is solved in the first quadrant in the $x^{1}x^{2}$-plane.
\begin{remark}[]
The reason for choosing these spatial boundaries is due to the properties of the HJB-equation.
In the $x^1$-direction we have diffusion, which requires boundary conditions at both ends. However, in the $x^2$-direction we have transport in the positive direction, because the coefficient in  front of $V_{x^2}$ is negative and that PDE is solved backward in time, and therefore no boundary condition is needed at $x^2=x^2_{max}$. At  $x^2=0$ the derivative in $x^2$-direction vanishes, thus no boundary condition is needed.
\end{remark}

In what follows, the calculations rely on the fact that the underlying factor dynamics are Ornstein-Uhlenbeck processes. By plugging in the processes $X^1$ and $X^2$ given by \eqref{OUex1comp1explicit} and \eqref{OUex1comp2explicit}, respectively, into the value function \eqref{valufunctionex1 in X} and rearranging the terms we obtain
\begin{eqnarray}
\label{valuefunction in X explicit}
V(t,z,x^1,x^2)&=&\sup_{u\in \mathcal{U}}\left\{\mathbf{E}\left[ \int_{t}^{T}e^{-r(s-t)}\left(x^{1}e^{-\beta^1(s-t)}+x^{2}e^{-\beta^2(s-t)}\right.\right.\right.\nonumber\\
&&+\:\left.\left.\left.\frac{f}{\kappa\beta^2}(1-e^{-\beta^2(s-t)})+\frac{\mu^1}{\beta^1}(1-e^{-\beta^1(s-t)})-K\right)u_{s}ds\right.\left| \vphantom{\int_{t}^{T}} \right. \left. \vphantom{\int_{t}^{T}} Z_t=z, \ X_t=x \right]\right.\nonumber\\
&&+\:\left.\mathbf{E}\left[ \int_{t}^{T}e^{-r(s-t)}\left(\int_{t}^{s}\lambda^1e^{-\beta^1(s-v)}dW_{v} \right.\right.\right. \\
&&+\left.\left.\left.\int_{t}^{s}\int_{\R}e^{-\beta^2(s-v)}\xi^2\tilde{N}^2(d\xi^2,dv)\right)u_{s}ds\right.\left| \vphantom{\int_{t}^{T}} \right. \left. \vphantom{\int_{t}^{T}} Z_t=z, \ X_t=x \right]\right\}\nonumber.
\end{eqnarray}
To compute $V(t,z,x^1_{max},x^2)$, we plug $x^1=x^1_{max}$ into \eqref{valuefunction in X explicit}. Since the volatilities of the processes $X^1$ and $X^2$ are not state dependent, we can, by choosing $x^1_{max}$ sufficiently large, expect the trajectories of the process $X^1+X^2$ to be decreasing until the maturity since both the processes $X^{1}$ and $X^{2}$ tend towards their long term means, $\mu^{1}$ and $0$, respectively. Then it is optimal to start to exercise the option immediately with maximum rate until $z=M$ since $x^1_{max}+x^2$ is much larger than the long time expectation. This argument holds since we only consider positive jumps, i.e. $x^1_{max}+x^2\geq x^{1}_{max}$. Thus, if we start the process $X^1+X^2$ in $(x^1_{max},x^2)$, we can define an optimal control as
\begin{equation}
\label{optimalcontrolmax}
u_{s}=\bar{u}\bold{1}_{s\in [t,t+\frac{M-z}{\bar{u}}]}(s).
\end{equation}
Then we get
\begin{eqnarray}
\label{valuefunction maxmax}
V(t,z,x^1_{max},x^2)&=&\bar{u}\int_{t}^{t+\frac{M-z}{\bar{u}}}e^{-r(s-t)}\left(x^{1}_{max}e^{-\beta^1(s-t)}+x^{2}e^{-\beta^2(s-t)}\right.\nonumber\\
&&+\:\left.\frac{f}{\kappa\beta^2}(1-e^{-\beta^2(s-t)})+\frac{\mu^1}{\beta^1}(1-e^{-\beta^1(s-t)})-K\right)ds.
\end{eqnarray}
Here, we used the fact that the control $u$ as defined in \eqref{optimalcontrolmax} is deterministic. This enables us to use the Fubini theorem and the martingale property for the second expectation in \eqref{valuefunction in X explicit}.

On the contrary, when we start the process $X^1$ at $x^1_{min}$, assumed to be sufficiently small, $X^1$ will increase until maturity. It is thus tempting to wait as long as possible before we use the control, cf. equation (4.6) in \cite{BLN}. However, since $x^{2}$ can be very large and the jump frequency is state independent, we are unable to draw this conclusion. To deal with this issue, we proceed as follows. First, we assume that the value function is continuous for all $(x^1,x^2)\in\R^2$ and both $x^1_{min}$ and $x^2_{max}$ are finite. Using this we consider the \emph{deterministic part} of the process $X^1+X^2$ starting at $(x^1_{min},x^2)$ at time $t$, that is, the first expectation in the value function \eqref{valuefunction in X explicit}. We observe that the integrand is a continuous function in time and will assume a maximum (and minimum) on $[t,T]$. Suppose it has its maximum at a time $t_{max}$. Furthermore, assume that
\begin{equation*}
\text{\emph{The deterministic part will dominate the whole process at $t_{max}$}.}
\end{equation*}
Then, due to continuity, the deterministic part will dominate the process on an interval $(t_1,t_2)$ that contains $t_{max}$. We then choose a control defined as
\begin{equation*}
u_{s}=\bar{u}\bold{1}_{s\in[t_1,t_2]}(s).
\end{equation*}
We substitute this into the expression \eqref{valuefunction in X explicit}. This is a deterministic control so, again, we can use the Fubini theorem and the martingale property to get rid of the conditional expectations. Define
\begin{align}
\label{deterministicpart}
J(t_1,t_2):=\bar{u}\int_{t_1}^{t_2}e^{-r(s-t)}&\left[x^{1}_{min}e^{-\beta^1(s-t)}+x^{2}e^{-\beta^2(s-t)}+\frac{f}{\kappa\beta^2}(1-e^{-\beta^2(s-t)})\right.\\
&\left.+\frac{\mu^1}{\beta^1}(1-e^{-\beta^1(s-t)})-K\right]ds\nonumber.
\end{align}
Then
\begin{equation}
\label{valuefunction minmin}
V(t,z,x^1_{min},x^2)=\max_{t_1,t_2\in [t,T]}J(t_1,t_2)
\end{equation}
which is found by solving the (deterministic) maximization problem:
\begin{align*}
\max_{t_1,t_2\in [t,T]}\bar{u}\int_{t_1}^{t_2}e^{-r(s-t)}&\left[x^{1}_{min}e^{-\beta^1(s-t)}+x^{2}e^{-\beta^2(s-t)}+\frac{f}{\kappa\beta^2}(1-e^{-\beta^2(s-t)})\right.\\&+\left.\frac{\mu^1}{\beta^1}(1-e^{-\beta^1(s-t)})-K
\right]ds\nonumber,
\end{align*}
subject to
\begin{equation}
\label{Jmaxconstraint}
\bar{u}(t_2-t_1)\leq M-z.
\end{equation}
We solve the limits $t_{1}, t_{2}$ numerically by using elementary calculus methods. That is, to solve the maximization problem, define
\begin{align*}
g(s,t):=e^{-r(s-t)}&\left[x^{1}_{min}e^{-\beta^1(s-t)}+x^{2}e^{-\beta^2(s-t)}+\frac{f}{\kappa\beta^2}(1-e^{-\beta^2(s-t)})\right.\\&\left.+\frac{\mu^1}{\beta^1}(1-e^{-\beta^1(s-t)})-K\right]\nonumber.
\end{align*}
This is the integrand in \eqref{deterministicpart}. By differentiating \eqref{deterministicpart} with respect to $t_1$ and $t_2$, we obtain the first order necessary conditions
\begin{equation}
\frac{\partial J(t_1,t_2)}{\partial t_2}=\bar{u}g(t_2,t)=0,
\end{equation}
and
\begin{equation}
\frac{\partial J(t_1,t_2)}{\partial t_1}=-\bar{u}g(t_1,t)=0.
\end{equation}
Furthermore, at the boundary where $t_2=t_1+\frac{M-z}{\bar{u}}$, we find
\begin{equation}
\frac{\partial J(t_1,t_2)}{\partial t_1}=\bar{u}(g(t_1+\frac{M-z}{\bar{u}},t)-g(t_1,t))=0.
\end{equation}

To conclude, we solve $t_1$ and $t_2$ from these three equations, substitute these into \eqref{deterministicpart} and see which gives the highest value still satisfying the constraint \eqref{Jmaxconstraint}.

\section{Numerical experiments}
In this section we will present the numerical solutions of the HJB--equations from the three examples in Section 5. All equations have the boundary conditions that the option value is zero when $t=T$ and $z=M$. In addition we truncate the boundary in infinity, and the truncated boundaries requires boundary conditions. In Example 1 we solve \eqref{hjbex1} with the boundary condition \eqref{BCxmaxex1} and \eqref{BCxminex1} on the truncated boundary. Similarly in Example 2 we solve \eqref{hjbex2} with the boundary condition \eqref{BCxmaxex2} and \eqref{BCxminex2} on the truncated boundary. And finally in Example 3 we solve \eqref{explicitHJBxex1} with the boundary condition \eqref{valuefunction maxmax} and \eqref{valuefunction minmin} on the truncated boundary. Below we specify the model parameters, which are chosen for the purpose of illustration, and the discretization parameters of the numerical scheme.


\subsection{Numerical scheme}
In Example 1 and Example 2 the HJB equations are PDEs defined over the variables $t$, $z$ and $x$, while the HJB equation in Example 3 is defined over the variables  $t$, $z$, $x^1$ and $x^2$. The equations are solved with finite difference methods (FDM). We use a first order Euler scheme in  $t$-direction. The $z$- and $x^2$-directions are handled explicitly with first order upwind schemes, while the $x$ or $x^1$-directions are handled implicitly with a second order central difference scheme. We start the time stepping at $t=T$ and go backward in time until $t=0$. 

The domain is discretized with a uniform grid in the $t$-, $x^2$- and $z$-directions whereas in the $x$- or $x^1$-direction we use an adaptive grid. The integral term is approximated with numerical integration, more precisely, the rectangle method with second order midpoint approximations.
The truncated boundary in the direction of jump, i.e. the $x$-direction in Example 2 and the $x^2$- direction in Example 3, causes some problems for the approximation of the integral, which is supposed to have upper limits at infinity. This problem is solved by linearly extrapolating the option price outside the truncated domain, and integrating up to a level where we get sufficiently accurate approximation of the integral.

\subsection{Numerical examples}
In this subsection we study the numerical examples from the previous section. They are all motivated by some swing options traded in the Scandinavian electricity market, which are called "Brukstidskontrakt". Such a contract gives the owner the right to buy a certain amount of electricity for her own selection of hours during 1 year. More precisely, for each of the 8760 hours in 1 year the holder of the contract must choose whether or not to use the contract. In our example we set the portion to 50\%, i.e. the holder must choose 4380 hours.

In practice the contracts are usually paid in advance and not for each time it is exercised, so the strike price will be $K=0$ in all the examples. We also use $T=1$, $M=\frac 12$ and $\bar{u}=1$. The three examples are further specified in the following.

{\bf Example 1: }
The model is specified by $\kappa = 0.014, \mu=40, \sigma=2.36$.
The truncated domain is defined by $x_{min} = 18.7, x_{max} = 61.3$.
The discretization parameters are $\Delta t = \Delta z = \frac 1{1000}$. The grid in $x-$direction is adaptive and consists of 671 grid points, with higher grid point density around $\mu$ and lower density near the truncated boundaries, $x_{min}$ and $x_{max}$.

{\bf Example 2: }
The model is specified by $\kappa = 0.014,  \alpha = 0.4, f=0.04,  \mu=39.9, \sigma=2.3387$. With these parameters the model i Example 1 and Example 2 have thesame mean and volatility. The grid and the truncated domain is as in Example 2.

{\bf Example 3: }
The model is specified by $\beta^1 = 0.014,  \mu=40, \sigma = 2.36, \beta^2 = 0.04,  \kappa=0.014, f=0.04 $.
The truncated domain is defined by $x^1_{min} = 17.2, x^1_{max} = 62.8, x^2_{min} = 0, x^2_{max} = 9$.
The discretization parameters are $\Delta t = \frac 1{3200}, \Delta z = \frac 1{3198}, \Delta x^2 = \frac{x^2_{max}}{40}$. In the $x^1$-direction we use an adaptive of 1200 grid points, with higher grid point density around $\mu$ and lower density near the truncated boundary in $x^1$-direction, i.e. $x^1=x^1_{min}$ and $x^1=x^1_{max}$. On the truncated boundary condition in $x^1$-direction we need to calculate \eqref{valuefunction maxmax} and solve \eqref{valuefunction minmin}. With the parameters in our example, $t_1$ and $t_2$  in \eqref{valuefunction minmin} turn out to be $t_1=t$ and $t_2=t+\frac 12 -z$. The truncated boundary in $x^2$-direction needs no boundary condition due to the nature of the PDE.  We also tried adaptive grid in the $x^2$-direction, but this did not seem to improve the accuracy.


In the following we visualize the numerical solution of these three examples. The two things we are most interested in are the option prices and the trigger prices. The trigger prices are also referred to as exercise curves, and they tell us when to exercise and when to hold. The option price is a function of two variables for each point in time in Example 1 and Example 2, and can then be visualized in a 3D plot. However, the option price in Example 3 is a function of three variables and is therefore harder to visualize, even for a fixed point in time, but we present it in 3 plots. The trigger prices  are in \cite{BLN} presented as exercise curves (see more in figures below). These prices are presented similarly for the results of Example 1 and Example 2. But for Example 3 the trigger price is actually a 3D surface for each point in time. We solve this by projecting the surface down to two different planes. We could also have made 3D surface plots of the trigger price in this example, but we think 2D plots are more instructive.

Figure \ref{fig:1}  and Figure \ref{fig:2} show the option price  at $t=0.5$ for Example 1 and Example 2 respectively. We see that the two plots a quite similar, and that the option price increases with increasing $x$-values. This makes sense from an economical point of view, since one would expect that a higher spot price results in a higher option price. Mathematically we see it from the value function \eqref{valufunctionex2and3} and the fact that the solution functions \eqref{OUexplicitex1} and \eqref{OUexplicitex2} are increasing functions of $x$. Furthermore, the option price decreases with increasing $z$-values which is in accordance with Proposition \ref{vznonpositive}, stating that whenever using the option it loses value, which makes economical sense.

Figure \ref{fig:3} shows the difference in option price between Example 1 and Example 2 at the same time level. More precisely Figure \ref{fig:3} shows the price from Example 2 minus the price from Example 1. We see that this difference is negative, which means that the option price is a little higher when we assume an underlying jump process. This is reasonable since the Gaussian OU-process in Example 1 has a symmetric distribution whereas the non-Gaussian OU-process in Example 2 has a positively skewed distribution. This positive skewness, which increases value in financial markets, is caused by the fact that we only have positive jumps.
\begin{figure}[htbp]
\centering
\subfigure[Option price Example 1]{
	\includegraphics[width=0.4\textwidth]{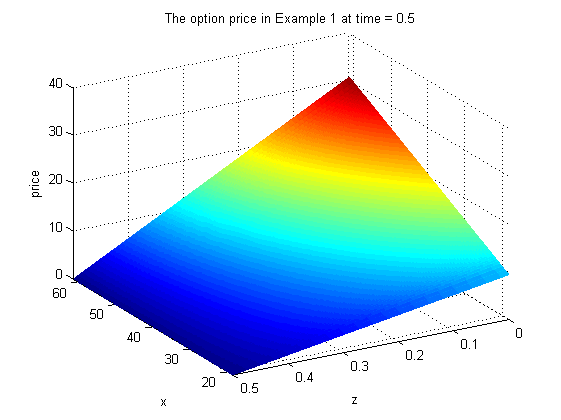}
	\label{fig:1}
}
\subfigure[Option price Example 2]{
	\includegraphics[width=0.4\textwidth]{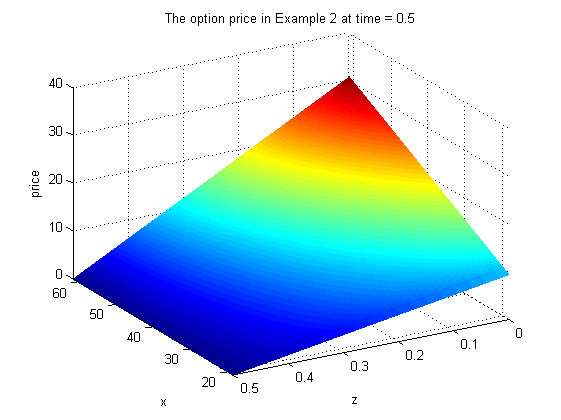}
	\label{fig:2}
}
\subfigure[Difference between option prices in Example 1 and Example 2]{
	\includegraphics[width=0.4\textwidth]{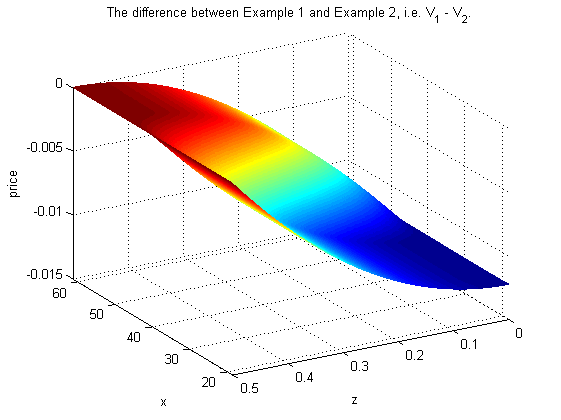}
	\label{fig:3}
}
\label{fig:123}
\caption{Option prices in Example 1 and Example 2}
\end{figure}

In Figure \ref{fig:4} we show the exercise curves for Example 1 and Example 2 at time = 0.5. The red curve corresponds to Example 2 and the black curve corresponds to Example 1. We see that the red curve lies more to the right than the black curve.

\begin{figure}[htbp]
\includegraphics[width=0.7\textwidth]{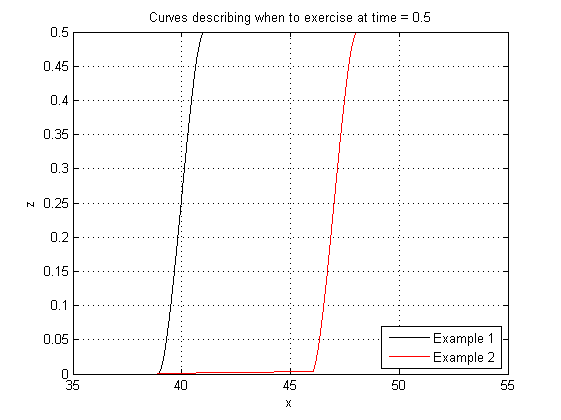}
\caption{Exercise curves for Example 1 and Example 2 at time = 0.5. Exercise when at points to the right of the curve and hold when at points to the left of the curve.}
\label{fig:4}
\end{figure}



In Figure  \ref{fig:8}--\ref{fig:10}  we show the option price of Example 3 at $t=0.5$. The function is plotted as a function of $x^1$ and $z$ for three values of $x^2$. We see that for each value of $x^2$ the plot looks similar to the plots in Figure \ref{fig:1} and Figure \ref{fig:2}, only the level of the surfaces changes a little.
\begin{figure}[htbp]
\centering
	\subfigure[$x^2=0$]{
	\includegraphics[width=0.45\textwidth]{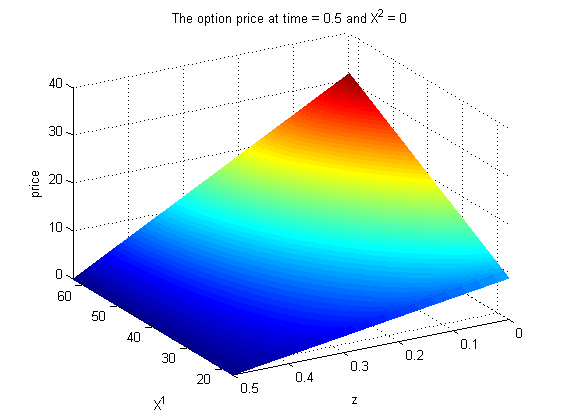}
	\label{fig:8}
	}
	\subfigure[$x^2=4.5$]{
	\includegraphics[width=0.45\textwidth]{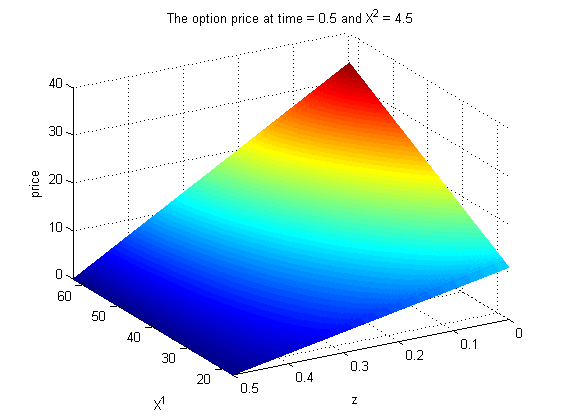}
	\label{fig:9}
	}
	\subfigure[$x^2=9$]{
	\includegraphics[width=0.5\textwidth]{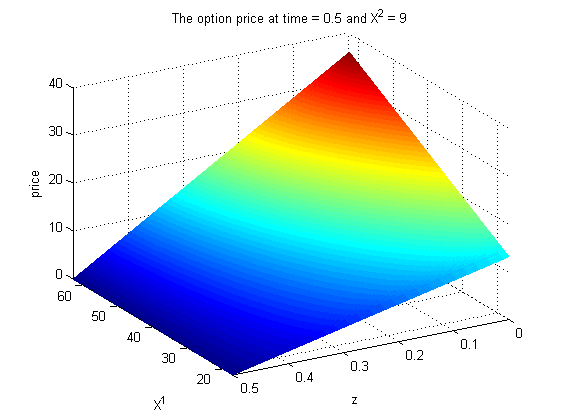}
	\label{fig:10}
	}
\label{fig:8910}
\caption{Option prices in Example 3. Plotted in the $x^1 z$-plane for $t=0.5$ and 3 values of $x^2$.}
\end{figure}

In Figure \ref{fig:5}--\ref{fig:6} we visualize the trigger price for $t=0.5$ in Example 3. Figure \ref{fig:5} shows the exercise surface projected down to the price,$z$-plane (where price = $x^1+x^2$), for various values of $x^2$. Figure \ref{fig:6} shows the exercise surface projected down to the  $x^1x^2$-plane, for various values of $z$.

It is worth noting that the slope of the curves in Figure \ref{fig:6} is approximately $-\frac{\beta_1}{\beta_2}=-0.35$. For example if we study the red curve and remove the point where $x^2=0$, and make a linear least squares approximation of it, it will have a slope of $-0.3517$. This means that the negative ratio of the mean reversion speeds approximates the slope of the exercise curves in the $x^1x^2$-plane. This is plausible from an economical point of view for the following reason. If the mean reversion speed is smaller for $x^1$ than for $x^2$, the holder will exploit a deviation from the long term mean earlier for process $X^2$ than $X^1$ by exercising the option. That is, she would require a higher contribution from $x^1$ to the price than from $x^2$ before exercising. This is because a high value of $x^2$ is likely to reduce more quickly and therefore it is beneficial to exercise with a lower value of $x^2$ in relation to $x^1$. On the contrary, if $x^1$ has a high price it is more likely to stay high longer. In this case, the holder might wait for even higher prices. A similar reasoning can be done for the opposite case when the mean reversion speed is bigger for $x^1$ than for $x^2$.

Notice that at $t=0.5$ and for any value of $z$, the holder should always hold if the values of $x^1$ and $x^2$ are small enough. The reason for this is the following: if the underlying price is small enough, you would expect it to be higher than this the rest of the time, and for $t=0.5$ it will therefore be beneficial to hold since $M=0.5$. However, in Figure \ref{fig:5} it can be seen that the exercise curves seems to stay above the line z=0.001. This is due to numerical error/instability. For finer grids the exercise curves will be closer to $z=0$ at for small values of $P=X^1+X^2$ and $t=0.5$. Similar effects can be observed for other values of $t$. The solution to this problem is either higher resolution on the grid, which requires more memory on the computer, or more accurate numerical schemes. The curves in Figure \ref{fig:6} bends a little in the lower right corner, and this is due to the mentioned instability in the numerical scheme.
\begin{figure}[htbp]
\centering
	\subfigure[Exercise curves plotted in the price,$z$-plane]{
	\includegraphics[width=0.4\textwidth]{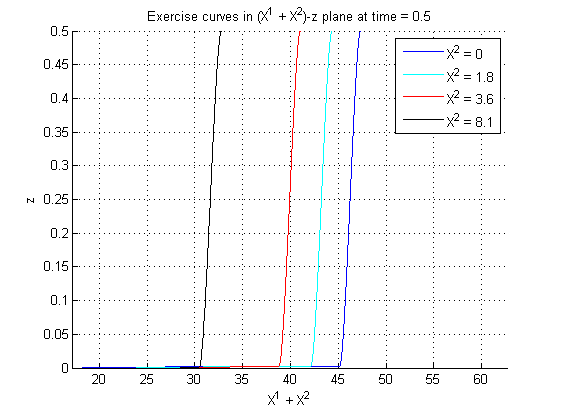}
	\label{fig:5}
	}
	\subfigure[Exercise curves plotted in the $x^1x^2$-plane]{
	\includegraphics[width=0.4\textwidth]{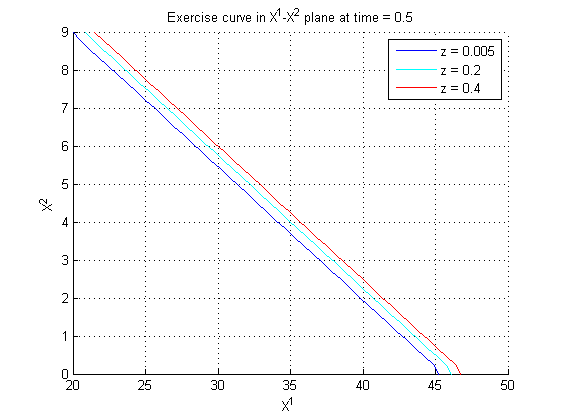}
	\label{fig:6}
	}
	\label{fig:56}
\caption{Trigger prices in Example 3}
\end{figure}

\subsection{Numerical accuracy in Example 3}
The analysis of the numerical scheme in Example 1 and Example 2 is similar to that of \cite{BLN}. To analyse Example 3 we study the Courant-Friedrichs-Lewy (CFL) condition.
The CFL number of the HJB equation in Example 3 is
\begin{align*}
C & = \frac {\Delta t}{\Delta x^2} \beta^2 x^2 + \frac {\Delta t}{\Delta z} \bar{u} \\
   & = \frac {\frac {1}{3200} }{\frac{9}{40}} 0.04 x^2 + \frac {\frac {1}{3200} }{\frac{1}{3198}}1\\
   & = \frac {1}{3200}   \left(1.6 \frac{x^2 }{9}+   3198  \right).
\end{align*}
A necessary condition for convergence is that the CFL number $C \leq C_{max}$, and in our case with $x^2 \in (0,9)$ we see that the CFL number $C \leq 1$. As mentioned we have observed some small instabilities in the numerical solution. We have also tried with larger values of $\Delta z$ compared to $\Delta t$, but the reported discretization parameters seem to give most accurate solutions. In order to establish convergence an implicit scheme should be developed, but this is not done in this work.

In the following we will present some evidence that the numerical solution converges to the correct solution of the HJB equation.
We attempt to evaluate both the numerical scheme and the calculated boundary conditions using \eqref{BCex1x1max}. This is done by trying to see how well they fit for extreme values of $x^1$, i.e. $x^1>>\mu$. We have used a very fine grid to solve HJB equation where $V$ is required to be linear in $x^1$-direction at the boundaries. This is similar to \cite{BLN}, where it is shown that this type of inaccurate boundary condition gives quite accurate solutions. Now this solution can be compared to the values we get from Equation \eqref{valuefunction maxmax}.

The difference between the numerical solution of the HJB equation and the value calculated by \eqref{valuefunction maxmax} is illustrated in Figure \ref{fig:7}. We see that the difference is relatively small compared to the option value.
For this example with $t=0.5$ and $z=0.4$ it is about 6 orders of magnitude lower than the option value. This is sufficiently small for us to trust the numerical solver. The difference may come from all of the following five sources: assumptions that the control is as described in \eqref{optimalcontrolmax},  numerical inaccuracy/instability of the scheme, truncation in $x^1$ direction, truncation in $x^2$ direction, extrapolation in $x^2$ direction and linear boundary condition on the PDE solver.
\begin{figure}[htbp]
\includegraphics[width=0.7\textwidth]{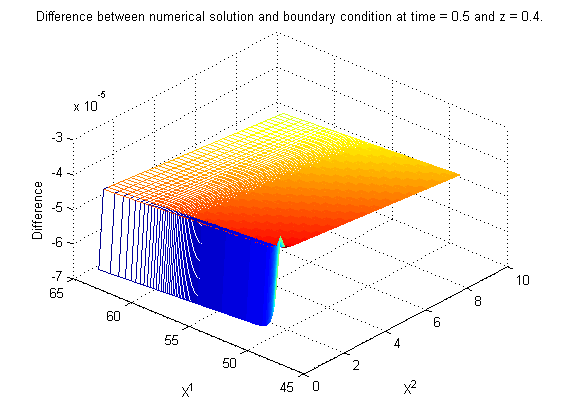}
\caption{Difference between the numerical solution and \eqref{BCex1x1max} for large values of $x^1$.}
\label{fig:7}
\end{figure}

\section{Conclusions}
In this paper, we developed and analyzed a valuation model for swing options on multi-commodity markets. The model is formulated as a dynamic programming problem, where the underlying dynamic structure is given by a multi-dimensional L\'{e}vy diffusion. This process models the price evolution of the commodities. The commodity prices are driven by a multi-dimensional Brownian motion and a multi-dimensional compound Poisson process. The introduction of the compound Poisson process is important since it allows non-Gaussian price evolution. This is important, in particular, on electricity markets, see, e.g. \cite{Benth Kallsen M-B, Hambly et al, Kjaer}. Furthermore, this model allows us to take into account jumps in price processes, which is also important on electricity markets.

From a analytical point of view, this study provides a multi-dimensional generalization of the analysis in \cite{BLN}. First, we analyze the model in the absence of an effective volume constraint. Along the lines of \cite{BLN}, we find that in this case the option does not loose value if used. Moreover, we prove that in the presence of an effective volume constraint for a given commodity, the usage of the option for this commodity will lower the value of the option. This is a intuitively appealing from the economical point of view. To tackle the problem of finding an optimal exercise rule and the price of the option, we analyze the pricing problem using the Bellman principle of optimality and derive the associated HJB-equation. In Section 4, we obtained an optimal exercise rule which states that if the immediate exercise payoff dominated the lost option value for a given commodity, then the option on this commodity should be exercised at a full rate. In particular, we conclude that this optimal exercise rule is a bang-bang rule. We also provide a verification theorem, which states conditions under which a given function coincides with the value function.

In addition we illustrate the results with three examples which we study numerically. We set up a straightforward FDM scheme to solve the associated HJB-equations. The numerical experiments seems to give reasonable results from both mathematical and economical points of view. In  the last of our examples we have also given evidence for convergence of the numerical solution. 


\subsubsection*{Acknowledgements}
Financial support from the project "Energy markets: modelling, optimization and simulation (EMMOS)", funded by the Norwegian Research Council under grant 205328 is gratefully acknowledged.


\begin{thebibliography}{AO}

\bibitem{Applebaum}
Applebaum, D. (2009). \emph{L\'{e}vy processes and stochastic calculus}, 2nd edition, Cambridge university press

\bibitem{Bender 0}
Bender, C. (2011). Primal and dual pricing of multiple exercise options in continuous time, \emph{SIAM Journal on Financial Mathematics}, 2/1, 562 �- 586

\bibitem{Benth Kallsen M-B}
Benth, F. E., Kallsen, J. and Meyer-Brandis, T. (2007). A non-Gaussian Ornstein-Uhlenbeck process for electricity spot price modeloing abd derivative pricing, \emph{Applied Mathematical Finance}, 14, 153 -- 169

\bibitem{BLN}
Benth, F. E., Lempa, J. and Nilssen, T. K. (2012). On the optimal exercise of swing options in electricity markets, \emph{The Journal of energy markets}, 4/4, 3 -- 28

\bibitem{Bjorgan et al}
Bjorgan, R., Song, H., Liu, C.-C. and Dahlgren, R. (2000) Pricing Flexible Electrcity Contracts, \emph{IEEE Transactions on Electricity Systems}, 15/2, 477 -- 482

\bibitem{Burger et al}
Burger, M., Graebler, B. and Schindlmayr, G. (2007). \emph{Managing energy risk}, Wiley Finance

\bibitem{Carmona Touzi}
Carmona, R. and Touzi, N. (2008). Optimal multiple stopping and valuation of swing options, \emph{Mathematical Finance}, 18/2, 239 -- 268

\bibitem{Dahlgren}
Dahlgren, M. (2005). A continuous time model to price commodity based swing options, \emph{Review of Derivatives Research}, 8/1, 27 -- 47

\bibitem{Deng}
Deng, S. Stochastic models of energy commodity prices and their applications: mean-reversion with jumps and spikes. Power working paper 073, University of California Energy Institute

\bibitem{Edoli et al}
Edoli, E., Fiorenzani, S., Ravelli, S., and Vargiolu, T. (2012). Modeling and valuing make-up clauses in gas swing valuation, \emph{Energy Economics}, doi:10.1016/j.eneco.2011.11.019

\bibitem{Fleming Soner}
Fleming, W. H. and Soner, M. (2006). \emph{Controlled Markov processes and viscosity solutions}, 2nd edition, Springer

\bibitem{Haarbrucker Kuhn}
Haarbr\"{u}cker, G. and Kuhn, D. (2009). Valuation of electricity swing options by multistage stochastic programming, \emph{Automatica}, 45, 889 -- 899

\bibitem{Hambly et al}
Hambly, B., Howison, S., and Kluge T. (2009). Modelling spikes and pricing swing options in electricity markets, \emph{Quantitative Finance}, 9/8, 937 -- 949
\bibitem{Jaillet et al}
Jaillet, P., Ronn, M. and Tompadis, S. (2004). Valuation of commodity based swing options, \emph{Management Science}, 14/2, 223 -- 248

\bibitem{Keppo2004}
Keppo, J. (2004). Pricing of electricity swing contracts, \emph{Journal of Derivatives}, 11, 26 -- 43

\bibitem{Kiesel et al}
Kiesel, R., Gernhard, J. and Stoll S.-O. (2010). Valuation of commodity based swing options, \emph{Journal of Energy Markets} 3/3, 91 -- 112

\bibitem{Kjaer}
Kjaer, M. (2008). Pricing of swing options in a mean reverting model with jumps, \emph{Applied mathematical finance}, 15/5, 479 -- 502

\bibitem{Lund Ollmar}
Lund, A.-C. and Ollmar, F. (2003). Analyzing flexible load contracts, preprint

\bibitem{Marshall}
Marshall, T. J. and Mark Reesor, R. (2011) Forest of stochastic meshes: A new method for valuing high-dimensional swing options, \emph{Operation Research Letters}, 39, 17 -- 21

\bibitem{Wahab et al}
Wahab, M. I. M., Yin, Z., and Edirisinghe, N. C. P. (2010). Pricing swing options in the electricity markets under regime-switching uncertainty, \emph{Quantitative Finance}, 10/9, 975 -- 994

\end{thebibliography}
\end{document}